\newtheorem{example}{Example}
\newtheorem{theorem}{Theorem}
\newtheorem{lemma}{Lemma}
\newtheorem{definition}{Definition}
\newtheorem{corollary}{Corollary}
\newtheorem{claim}{Claim}
\newtheorem{proposition}{Proposition}[section]
\newtheorem*{remark}{Remark}
\title{Fair and Truthful Allocations Under Leveled Valuations}
\date{}
\author[1,2]{
George Christodoulou
}
\author[1,2]{
Vasilis Christoforidis}
\affil[1]{Archimedes/Athena RC}
\affil[2]{Aristotle University of Thessaloniki}
\begin{document}
    \maketitle
    
    \begin{abstract}
We study the problem of fairly allocating indivisible goods among
agents which are equipped with {\em leveled} valuation functions. Such
preferences, that have been studied before in economics and fair
division literature, capture a simple and intuitive economic
behavior; larger bundles are always preferred to smaller ones. We provide a fine-grained analysis for various subclasses of leveled valuations focusing on two extensively studied notions of fairness, (approximate) MMS and EFX. In particular, we present a general positive result, showing the existence of $2/3$-MMS allocations under valuations that are both leveled and submodular. We also show how some of our ideas can be used beyond the class of leveled valuations; for the case of two submodular (not necessarily leveled) agents we show that there always exists a $2/3$-MMS allocation, complementing a recent impossibility result. Then, we switch to the case of subadditive and fractionally subadditive leveled agents, where we are able to show tight (lower and upper) bounds of $1/2$ on the approximation factor of MMS. Moreover, we show the existence of exact EFX allocations under general
leveled valuations via a simple protocol that in addition satisfies
several natural economic properties. Finally, we take a mechanism design approach and we propose protocols that are both truthful and approximately fair under leveled valuations.
\end{abstract}
    \section{Introduction}
\label{sec:intro}

Allocating indivisible resources in a fair 
manner has been of central importance in economics and computer science. The fundamental notions of proportionality and envy-freeness provide strong existential results for divisible goods, but their guarantees do not carry over to the indivisible setting. Therefore, we resort to various relaxations and approximations thereof, such as the (approximate) maximin share and envy-freeness up to any item.

The \emph{maximin share guarantee} (MMS) is a compelling and well-studied notion proposed by \citet{BudishMMS} which can be seen as a natural extension of the well-known cut-and-choose protocol. The goal here is to allocate a bundle to each agent, ensuring that its value is at least as high as her maximin share (or a large fraction thereof). Intuitively, the maximin share corresponds to the maximum value an agent can get after proposing an initial allocation and keeping the least desirable bundle for herself. Such allocations might not exist for additive valuations \citep{KurokawaProcacciaWangMMS}. The study of MMS allocations and approximations thereof (mostly under additive utilities) has received significant attention over the last few years \citep{KurokawaProcacciaWangMMS,AmanatidisMarkakisNikzadSaberiMMSapx,BarmanKrishnamurthyMMS,GargTakiMMSimprv}.

In sharp contrast to weaker relaxations, the existence of \emph{envy-freeness up to any good} (EFX) allocations \citep{CaragiannisEFX} is yet to be settled beyond a handful of simple settings \citep{PlautRoughgarden,ChaudhuryGargMehlhorn}.
A reasonable approach is to focus on restricted valuation spaces and approximations.

In this work, we focus on a specific class of valuations, namely \emph{leveled} valuations.
They capture a simple economic behavior in which any set of items is strictly preferred to any other set of smaller cardinality, regardless of the specific items included; among bundles of equal size any preference ordering is feasible.\footnote{Leveled preferences can be perceived as a way of prioritizing quantity over quality, thus giving rise to the definition of \emph{quantity-monotonic} preferences used by \citet{PapaiQuotas}.} \citet{BNTCCompetitiveJournal2021} study leveled preferences in the context of competitive equilibrium in markets, while \citet{GafniCopies} study EFX allocations under leveled preferences for chores. We also note that leveled preferences might arise in the context of web advertising where some competitors aim to increase their total coverage rather than focusing on specific advertising spots in the ad space allocation. Lastly, leveled preferences might appear under budget constraints: when resources are scarce, agents facing budget constraints might prioritize quantity to ensure that they have enough items to meet their needs.
    \subsection{Our Contributions}
\label{subsec:contributions}

We study fair allocation in a setting where a set of $m$ indivisible
goods need to be allocated to $n$ agents in a {\em fair} manner. We
focus on the case where the agents are equipped with {\em leveled}
valuations and study two well-studied notions of fairness, (approximate)
MMS and EFX.

 \paragraph{(approximate) MMS.} First, in Section~\ref{sec:mms} we focus on (approximate) MMS and we are
able to show that every instance where $m<2n$ admits an exact MMS
solution. In stark contrast, when $m\geq 2n$ we observe that there are
instances (even for two agents and four items) where there is no
approximate MMS solution, for any approximation factor. 

In light of this negative result, and in order to provide a more
fine-grained analysis of approximate MMS under leveled valuations, we
focus on important valuation subclasses such as submodular leveled
(Section \ref{sec:submodMMS}), fractionally subadditive (XOS) and subadditive leveled functions
(Section \ref{sec:subadditive-mms}). Submodular, XOS, and subadditive valuations capture complex combinatorial preferences and have
been studied extensively in microeconomics and algorithmic game
theory \citep{LehmannLehmannNisan,Feige09,BayesianCombAuctions,FeldmanFGL20}.

In Section \ref{sec:submodMMS}, in our main technical result, we
show the existence of $2/3$-MMS allocations for any number of agents
and items when agents are equipped with submodular leveled
valuations. We note that our positive result (lower bound) complements
the recent impossibility result (upper bound) of
\citet{KulkarniKulkarniMehtaMMS} that states that there are instances
with two agents where a guarantee better than $2/3$ is impossible to
achieve. We demonstrate how some of our proof ideas may help to provide
better bounds for valuations beyond leveled. Indeed, in Theorem
\ref{thm:2submod} we are able to show that for the case of two
submodular (not necessarily leveled) agents and multiple goods there
is always a $2/3$-MMS allocation. We highlight the fact that the nice structure of leveled valuations does not yield any stronger bounds in this case. Next, we switch to fractionally subadditive (XOS) and subadditive leveled valuations where we are able to
provide tight approximate MMS bounds.
\paragraph{EFX.} In Section \ref{section:efx} we shift our attention to EFX
allocations. We prove the existence of exact EFX allocations under
leveled valuations via a simple protocol. Moreover, our algorithm works even when only ordinal information is given as input. In
particular, under strict leveled preferences our mechanism achieves a
set of natural properties in addition to EFX, 
overcoming the barriers of existing impossibility results on the
interplay between fairness, efficiency, and incentive
compatibility. Our results complement the previous work of 
\citet{GafniCopies} which showed the existence of EFX allocations under
the special case of \emph{additive} leveled valuations.\footnote{We note that the model studied in \citep{GafniCopies} is different than ours. They consider fair allocations for goods with copies and develop a duality framework between chores and goods with copies. Without copies, their proposed fairness concept coincides with the standard notion of EFX.}

\paragraph{Truthfulness.} Finally, we explore the interplay of fairness with truthfulness for the class of leveled valuations. We are able to show that constant factor MMS approximations can be achieved via truthful mechanisms under leveled valuations. This is in contrast to (non-leveled) valuations where non-constant upper bounds persist even for the case of additive valuations~\citep{AmanatidisBirmpasChristodoulouMarkakis}. Additionally, we show an algorithmic characterization of mechanisms satisfying truthfulness, Pareto optimality, non-bossiness, neutrality, and EFX.

We note that our focus is explicitly on the existence of such notions, rather than on their computational aspects.

    \subsection{Related Work}
\label{subsec:related}

In this section we discuss prior works closely related to EFX, MMS, and leveled valuations. The growing literature on fair allocation of indivisible items is too extensive to cover on this paper and thus, we refer the interested reader to the recent survey of \citet{AmanatidisSurvey} for a detailed overview.

\textbf{EFX.} The concept of EFX poses several challenges leading to important open problems in the area.  \citet{PlautRoughgarden} showed that when agents exhibit identical valuations or share the same ordinal ranking of goods, EFX is guaranteed to exist. In a breakthrough result, \citet{ChaudhuryGargMehlhorn} showed that complete, exact EFX allocations always exist for three agents under additive utilities, later improved to more general valuations  \citep{Berger_Cohen_Feldman_Fiat_2022,AkramiEC23EFX}. Other works considered a limited number of items or agents sharing one of two possible valuation types \citep{AmanatidisMarkakisNtokosBirds,MaharaTypes,MaharaExtension}, while a major line of work focused on restrictions on the valuation space, prompting a series of works on dichotomous valuations \citep{AmanatidisEFXStories,BabaioffEzraFeigeDichotomous,BenabbouMRFs,HalpernBinary}, lexicographic preferences \citep{HosseiniLexicographic}, graph instances \citep{ChristodoulouFiatKoutsoupiasSgouritsa}, and allocations of multisets \citep{GorantlaMultiset}. Recent work has shown that EFX allocations might not exist under general monotone valuations when the items to be allocated are chores \citep{CS24}. 

\textbf{MMS.} The existence of maximin fair allocations is not guaranteed for additive utilities \citep{KurokawaProcacciaWangMMS}. 
A series of works established strong approximation guarantees reaching a factor of $3/4 + \mathcal{O}(1/n)$ \citep{KurokawaProcacciaWangMMS,AmanatidisMarkakisNikzadSaberiMMSapx,BarmanKrishnamurthyMMS,GhodsiHSSYmor,GargTakiMMSimprv,akrami2023simplification}. Very recently, \citet{akrami2023breaking} broke the barrier of $3/4$. Considerable effort has been shown to the case of four or fewer agents \citep{AmanatidisMarkakisNikzadSaberiMMSapx,GourvesMonnotMMSapx,GhodsiHSSYmor,feige2022improvedMMSforThree} or settings with few items \citep{FeigeInapprox,hummel2023lower}. The current impossibility result for additive valuations stands at $1-1/n^4$ due to \citet{FeigeInapprox}. Moreover, the study of maximin fair allocations beyond the additive domain has been of growing interest. \citet{BarmanKrishnamurthyMMS, FeigeSubmodularMMS} and \citet{GhodsiArtIntBeyondAdditive, Seddighin_Seddighin_2022} provided a series of results for complement-free valuations. We briefly summarize the known results in Table \ref{table:summaryResults}.

\textbf{Leveled Preferences.}
Typical examples of such preferences mentioned in the literature include the distribution of cabinet seats among parties forming a coalition government or the allocation of offices among different departments at universities. \citet{PapaiQuotas} characterized the set of strategyproof mechanisms under standard economic assumptions in assignment problems. \citet{GafniCopies} prove the existence of EFX allocations under \emph{leveled additive} valuations, while \citet{BabaioffEzraFeigeDichotomous} consider $\epsilon$-leveled valuations (additive valuation functions that are $\epsilon$-dichotomous).  \citet{OhProcacciaSuksompong} show that there exists a deterministic algorithm that computes an EF1 allocation using $\mathcal{O}(n^2)$ queries under leveled valuations. Despite their simplicity, leveled valuations have played an important role in constructing counterexamples; we note that the first negative results for both notions (non-existence of MMS allocations and exponential query complexity for EFX) actually employ leveled valuations \citep{KurokawaProcacciaWangMMS, PlautRoughgarden}. The best currently known MMS bounds for more general classes are also (essentially) leveled.
    \section{Model and Preliminaries}

In this section we introduce the main concepts and notation.

\textbf{Model.} We want to allocate $m$ goods to $n$ agents. We denote as $M = \{1,...,m\}$ the set of goods and with $N = \{1,...,n\}$ the set of agents. Each agent $i$ is equipped with a valuation function $v_i: 2^M \rightarrow \mathbb{R}_{\geq 0}$.

\textbf{Allocations and bundles.} An allocation is a tuple of subsets of $M$, namely an $n$-partition $A = (A_1,...,A_n)$ where $A_i$ is the bundle allocated to agent $i$. The elements of $A$ are pairwise disjoint, that is $A_i \cap A_j = \emptyset$ for every pair of agents $i,j \in N$. $\Pi_n(M)$ denotes the set of all possible $n$-partitions of $M$. Moreover, we only focus on complete allocations in which no items remain unallocated, that is $\bigcup_{i \in N} A_i = M$. 
We denote $[k] = \{1,...,k\}$ for any $k \in \mathbb{N}$.

\textbf{Classes of valuation functions.} Throughout this paper, we consider the valuation functions to be normalized, i.e. $v(\{\emptyset\}) = 0$, and monotone ($v(S) \leq v(T)$ for $S \subseteq T$). We focus on leveled valuations.

\begin{definition}
A valuation function $v$ is leveled if for any two bundles $S, T$ with $\lvert S \rvert > \lvert T \rvert$ it holds that $v(S) > v(T)$.
\end{definition}

Moreover, we consider subclasses of complement-free valuations.

\begin{itemize}
    \item A valuation function $v$ is \emph{additive} if for every $S \subseteq M$, we have that $v(S)=\sum_{j \in S} v(\{g\})$. For ease of convenience, we may write $v(g)$ instead of $v(\{g\})$.
    
    \item A valuation function $v$ is \emph{submodular} if $v(S) + v(T) \geq v(S \cup T) + v(S \cap T)$ for all sets $S$ and $T$ in $M$.

    \item A valuation function $v$ is \emph{fractionally subadditive} (\emph{XOS}) if there exists a family of additive functions $a_1, \dots, a_k$ such that $v(S) = \max_{l \in [k]} a_l(S)$

    \item A valuation function $v$ is \emph{subadditive} if $v(S) + v(T) \geq (S \cup T)$ for all sets $S$ and $T$ in $M$.
\end{itemize}

The hierarchy of the aforementioned classes is as follows: Additive $\subset$ Submodular $\subset$ XOS $\subset$ Subadditive.

\textbf{Fairness.} In envy-free allocations, no agent would strongly prefer to swap her allocated bundle with someone else. Formally, $v_i(A_i) \geq v_i(A_j), \forall i, j \in N$. Envy-free solutions may fail to exist in settings with indivisible goods. Therefore, several relaxed versions have been introduced in the literature. We focus on EFX, which requires that envy vanishes after removing any item from the bundle of the envied agent.

\begin{definition} [Envy-freeness up to any good (EFX)] 
An allocation $A \in \Pi_n(M)$ is called envy-free up to any good (EFX) if $\forall i,j \in N, \forall g \in A_j: v_i(A_i) \geq v_i(A_j \setminus \{g\})$.
\end{definition}

Another important notion is that of maximin share. $\mu_i^n(M)$ denotes the maximin share for an agent $i$, that is the maximum value she can obtain after proposing an $n$-partition of the items and securing the worst bundle for herself.  

\begin{definition} [Maximin share]
An allocation $A$ is said to be maximin share fair (MMS) if $v_i(A_i) \geq \mu_i^n(M) = \max_{A \in \Pi_n(M)} \min_{k \in [n]} v_i(A_k), \forall i \in N$.
\end{definition}

 When $n$ and $M$ are clear from context, we may write $\mu_i$ for notation simplicity. Furthermore, without loss of generality, we assume that $\mu_i^n(M) = 1$ for every agent $i \in N$. Since maximin share allocations do not always exist, we focus on multiplicative approximations of the maximin share. 

\begin{definition} [$\alpha$-MMS]
Let $\alpha \in (0,1]$. An allocation $A$ is said to be $\alpha$-maximin fair if it guarantees to an agent an $\alpha$-fraction of her MMS, that is $v_i(A_i) \geq \alpha \cdot \mu_i^n(M), \forall i \in N$.
\end{definition}

\textbf{Mechanisms and Truthfulness.} A deterministic allocation mechanism without payments, henceforth a \emph{mechanism}, is a mapping $\mathcal{X}$ from valuation profiles to allocations. We let $\mathcal{X}_i(\boldsymbol{v})$ denote the set of items agent $i$ receives. A mechanism $\mathcal{X}$ is truthful if for any instance $\boldsymbol{v} = (v_1, \dots, v_n)$, any player $i \in N$, and any $v_i'$: $v_i(X_i(\boldsymbol{v})) \geq v_i(X_i(v_i', v_{-i}))$

    \section{Maximin share (MMS)}
\label{sec:mms}

In this section, we focus on approximate MMS and show general positive and negative results under leveled valuations. First, we focus on instances with few items; we show that every instance with $m<2n$ admits an exact MMS allocation under general leveled valuations.\footnote{Similar settings with a small number of items have been explored in a surge of works for additive utilities \citep{BouveretLemaitreMMS,KurokawaProcacciaWangMMS,FeigeInapprox,hummel2023lower}.} Exceeding this threshold may lead to instances where the approximation ratio becomes unbounded. 
Subsequently, we proceed to a more refined analysis of the concept for valuations that are both leveled and submodular. We show a general positive result, establishing the existence of $2/3$-MMS allocations under submodular leveled valuations (Theorem \ref{thm:existence-submodular}). We then switch to subadditive and fractionally subadditive agents and show tight lower and upper bounds on the approximation of the maximin share (Section \ref{sec:subadditive-mms} and Section \ref{sec:upperBounds}).

We now state our proposition concerning scenarios where the number of items is limited in relation to the number of agents, specifically when $m < 2n$.

\begin{proposition} \label{prop:fewItems}
    Every instance where $m<2n$ admits an exact MMS allocation under general leveled valuations.
\end{proposition}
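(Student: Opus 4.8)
The plan is to split on whether there are fewer items than agents. If $m < n$, then in every $n$-partition at least one bundle is empty, so $\mu_i = v_i(\emptyset) = 0$ for every agent and any complete allocation is trivially MMS. The substantive case is $n \le m < 2n$, and here I would first pin down the maximin share exactly using only the leveled property.

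Set $t = 2n - m$, so $1 \le t \le n$. I claim $\mu_i$ equals agent $i$'s $t$-th largest value among single items: writing her items in decreasing singleton value as $g_1, \dots, g_m$, we have $\mu_i = v_i(\{g_t\})$. For the upper bound, note that since $m < 2n$, any $n$-partition into nonempty bundles must contain at least $t$ singletons (if $s$ bundles are singletons and the rest have size $\ge 2$, then $m \ge 2n - s$, forcing $s \ge t$), and among any $t$ distinct items at least one has value $\le v_i(\{g_t\})$; a partition with an empty bundle is even worse. For the lower bound, the partition that isolates $g_1, \dots, g_t$ as singletons and pairs up the remaining $2(m-n)$ items achieves worst-bundle value exactly $v_i(\{g_t\})$, since every size-$2$ bundle beats every singleton by the leveled property.

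The key structural observation is that the leveled property makes most agents easy to satisfy: if an agent receives at least two items, her bundle value strictly exceeds every single-item value and hence exceeds $\mu_i$ automatically. So I only need to control agents who receive exactly one item, and a counting argument shows at least $t$ agents must be singletons. The plan is therefore to choose exactly $t$ singleton agents, hand each a single item worth at least her $\mu_i$, and split the remaining $m - t = 2(m-n)$ items two-per-agent among the other $m - n$ agents, who are then auto-satisfied. This reduces the whole problem to finding a size-$t$ matching in the bipartite graph $G$ on agents and items with an edge $i \sim g$ whenever $v_i(\{g\}) \ge \mu_i$; note that each agent $i$ has at least $t$ neighbors, since $g_1, \dots, g_t \in T_i := \{g : v_i(\{g\}) \ge \mu_i\}$.

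The final step is to verify that this matching exists, and here I expect the main pitfall to be the natural but doomed attempt to match all $n$ agents simultaneously — this can violate Hall's condition (for instance when all agents share the same size-$t$ top set). The fix is that we only need to saturate $t$ agents, which follows from the deficiency version of Hall's theorem: every nonempty $A \subseteq N$ satisfies $|N_G(A)| \ge t$ (its neighborhood contains some agent's top set, of size $\ge t$), while $|A| - (m-n) \le n - (m-n) = t$, so the deficiency $|A| - |N_G(A)|$ never exceeds $m-n$ and a matching of size $n - (m-n) = t$ exists. Taking any $t$ of its edges yields the required singleton assignment, and distributing the leftover items completes the MMS allocation.
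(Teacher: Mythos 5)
Your proof is correct, and while it shares the paper's basic decomposition (exactly $t = 2n-m$ agents must receive singletons, everyone who gets two or more items is automatically satisfied because leveled valuations make any pair beat any singleton), it handles the crucial step --- satisfying the singleton recipients --- quite differently. The paper runs an arbitrary picking order in which the first $n-r$ agents each greedily take their favourite remaining item, and justifies this with a (cited, not proved) reducibility lemma stating that removing one agent and one item does not decrease the remaining agents' maximin shares. You instead pin down the maximin share exactly as $\mu_i = v_i(\{g_t\})$, the $t$-th largest singleton value, and then invoke the deficiency form of Hall's theorem to extract a size-$t$ matching between agents and acceptable items. Your route is more self-contained (no appeal to an external reducibility lemma) and the explicit formula for $\mu_i$ is a nice byproduct, but the Hall machinery is heavier than necessary: the paper's greedy order already witnesses the matching directly, since the $k$-th picker ($k \le t$) loses at most $k-1$ items to her predecessors and therefore secures at least her $k$-th favourite item, whose value is at least $v_i(\{g_t\}) = \mu_i$. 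Your own observation that every agent has at least $t$ neighbours in the bipartite graph is exactly what makes this greedy shortcut work, so you could replace the defect-Hall step with two lines if you wanted the leaner argument.
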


\begin{proof}
    For $n \geq m$ verifying the existence of MMS allocations is straightforward. Let's consider the case of $m=n+r$ for some positive $r<n$. Note that in that case, due to leveled valuations, the MMS partition of each agent consists of $n-r$ singletons and $r$ pairs, and the MMS value corresponds to the value of the $n-r$-th best singleton.
    Let $\sigma$ be some arbitrary permutation of $[n]$. We let the first $n - r$ agents w.r.t $\sigma$ to choose their preferred item. Subsequently, we allocate the rest of the items in pairs to the remaining agents in an arbitrary manner. Clearly, agents receiving two items guarantee their full MMS due to leveled valuations. In the case of an agent $i$ receiving a single item, it holds that there are are at most $n-r-1$ agents that precede her in $\sigma$ each getting at most one item. Therefore $i$ can still guarantee her MMS value.
\end{proof}

Adding more items beyond this point transitions us from exact MMS solutions to instances with unbounded approximation ratio, even for two agents and four goods.

\begin{example}
\normalfont
Consider an instance with $n$ agents and $m=2n$ goods $\{g_1, \dots, g_m\}$. For all agents $i \in [n-1]$ we have that $v_i(S) = \epsilon$ for all $S$, such that $\lvert S \rvert = 1$. Furthermore,  $v_i(\{g_{2j-1}, g_{2j}\}) = 1, \forall j \in [n], \forall i \in [n-1]$, while any other pair of items is valued at $\epsilon + \delta$, with $\epsilon, \delta$ small positive constants. For agent $n$ we have $v_n(S) = \epsilon$ for all $S$, $\lvert S \rvert = 1$, and $v_n(S') = 1$ for all bundles $S' \in \mathcal{S} = (\{g_1,g_{2n}\}, \{g_2, g_3\}, \dots, \{g_{2n-2}, g_{2n-1}\})$; the rest of the pairs are also valued at $\epsilon + \delta$. Lastly,  $v_i(S) = \lvert S \rvert + \epsilon$ for all $\lvert S \rvert > 2$. One can easily check that the valuations described above are leveled. The proof follows the structure of the proof of Corollary \ref{cor:upperSubadditive} due to \citet{GhodsiArtIntBeyondAdditive}, where all but one agent share the same valuation function. The maximin share of each agent is equal to $1$. We observe that one must receive a low-valued bundle. It is easy to see that we can make $\epsilon$ infinitesimally small while keeping the valuations leveled, thus making any approximation guarantee arbitrarily bad.

Hence, we shift our attention to subclasses of leveled valuation functions, such as submodular, fractionally subadditive, and subadditive leveled valuations. Our results are summarized in Table \ref{table:summaryResults}. 
\end{example}

\begin{table}[]
    \centering
    \begin{tabular}{c|c|c}
         \textbf{Submodular} & Existence & Non-Existence  \\
         \hline
         $n = 2$ & $\boldsymbol{2/3}$ & $2/3$ \\
         $n \geq 3$ & $10/27$ & $3/4$\\
         \hline
         \textbf{Submodular Leveled} \\
         \hline
         $n = 2$ & $\boldsymbol{2/3}$ & $2/3^*$ \\
         $ n \geq 3$ & $\boldsymbol{2/3}$ & ${3/4}^*$ \\

         \hline
         \textbf{XOS} \\
         \hline
         $n \geq 2$ & $3/13$ & $1/2$ \\
         \hline
         \textbf{XOS Leveled} \\
         \hline
         $n \geq 2$ & $\boldsymbol{1/2}$ & $1/2^*$ \\

         \hline
         \textbf{Subadditive}\\
         \hline
         $n \geq 2$ & $\Omega(1/ \log{n} \log\log{n})$ & $1/2$ \\
         \hline
         \textbf{Subadditive Leveled} \\
         \hline
         $n \geq 2$ & $\boldsymbol{1/2}$ & $1/2^*$
         
    \end{tabular}
    \caption{Best known approximations of MMS with submodular agents are due to \citet{GhodsiArtIntBeyondAdditive, FeigeSubmodularMMS, KulkarniKulkarniMehtaMMS, Seddighin_Seddighin_2022}. The results on XOS valuations are due to \citet{GhodsiArtIntBeyondAdditive} and \citet{akrami2023randomizedXOS}, while bounds on subadditive valuations come from \citep{GhodsiArtIntBeyondAdditive} and \citep{Seddighin_Seddighin_2022}. Our results appear in \textbf{bold}. The upper bounds noted with $(^*)$ are not explicitly stated in previous works, but one can tweak known counterexamples to also work for leveled valuations.}
    \label{table:summaryResults}
\end{table}

\subsection{Submodular Valuations} \label{sec:submodMMS}
In this section we provide a thorough analysis w.r.t approximate MMS
for the case of valuations that are both leveled and
submodular. First, we present a general positive result
(Theorem~\ref{thm:existence-submodular}), showing that a $2/3$-MMS
allocation always exists, and we present a procedure that computes 
one. We argue that our proof ideas can help to provide better bounds
for valuations beyond leveled. Indeed in Theorem \ref{thm:2submod} we are able to show
that for the case of two submodular (not necessarily leveled) agents
and multiple goods there always exists a $2/3$-MMS allocation. We present our main technical result which is the
existence of a $2/3$-MMS allocation for submodular leveled
valuations. 
\begin{theorem}\label{thm:existence-submodular}
Every instance where agents have submodular leveled valuations admits a $2/3$-MMS allocation.
\end{theorem}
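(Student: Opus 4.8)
The plan is to first dispose of the easy regime and then reduce the whole problem to controlling a single, structurally simple source of loss. Since Proposition~\ref{prop:fewItems} already gives an \emph{exact} MMS allocation whenever $m<2n$, I would assume $m\ge 2n$ and set $q=\lfloor m/n\rfloor\ge 2$, writing $m=qn+r$ with $0\le r<n$. The first step is to use the leveled property to pin down the shape of the maximin partition. Because any bundle strictly dominates any bundle of smaller cardinality, the minimum of a partition is always attained on a bundle of smallest cardinality; hence the maximin partition of each agent $i$ is cardinality-balanced, consisting of $r$ bundles of size $q+1$ and $n-r$ bundles of size $q$, and the binding (minimum) bundle has size exactly $q$. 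Two consequences drive everything. First, $\mu_i \le V^{\max}_q(i):=\max_{|B|=q}v_i(B)$. Second, and crucially, every bundle of size at least $q+1$ is worth strictly more than $\mu_i$, since by leveledness $\min_{|B|=q+1}v_i(B) > \max_{|B|=q}v_i(B)\ge \mu_i$. Thus any agent who is handed $q+1$ (or more) items automatically receives her \emph{full} MMS, and the entire difficulty is concentrated on the $n-r$ agents who receive exactly $q$ items.

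Second, I would record the one submodular tool I expect to need: for a monotone submodular $v$, a set $S$ and any $t\le|S|$ there is a subset $T\subseteq S$ with $|T|=t$ and $v(T)\ge \tfrac{t}{|S|}\,v(S)$. This follows from the discrete concavity of $t\mapsto \mathbb{E}_{B\subseteq S,\,|B|=t}[v(B)]$ (equivalently, concavity of the multilinear extension along the direction $\mathbf 1_S$), which gives the averaged bound $\mathbb{E}[v(B)]\ge \tfrac{t}{|S|}v(S)$ and hence a witness $T$. Taking $t=\lceil \tfrac23 q\rceil$ shows that inside any size-$q$ bundle worth $w$ there is a sub-bundle of size $\lceil \tfrac23 q\rceil$ worth at least $\tfrac23 w$; this is the mechanism by which a factor $2/3$ will appear.

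The core of the argument is then an allocation procedure for the size-$q$ recipients together with a reducibility/induction argument in the spirit of the lemma used in Proposition~\ref{prop:fewItems}. I would start from a cardinality-balanced allocation (satisfying the $(q+1)$-recipients for free) and iteratively repair the size-$q$ recipients: at each stage I look for a remaining agent $i$ and a bundle $B$ of size at most $q$ with $v_i(B)\ge \tfrac23\mu_i$ whose removal, together with agent $i$, does not lower the maximin value of any remaining agent on the residual instance (now with one fewer agent and $m-|B|$ items). Assign $B$ to $i$, delete both, and recurse; since each assigned bundle has size at most $q$ and $nq\le m$, the items never run out. The existence of such a \say{safe} agent--bundle pair at every stage is exactly where leveledness and the subset lemma combine: leveledness guarantees that removing a bundle of size $\le q$ and one agent cannot decrease the residual $q'=\lfloor (m-|B|)/(n-1)\rfloor$ below what the remaining agents can still tolerate, while the subset lemma guarantees that some remaining agent can always be given a size-$\lceil\tfrac23 q\rceil$ core of one of her own maximin bundles, worth $\ge\tfrac23\mu_i$, padded up to size $\le q$ with leftover items.

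The step I expect to be the main obstacle is precisely the existence of a safe pair at every stage, i.e.\ showing that the greedy repairs never get stuck or damage an already-satisfied agent. The danger is that the high-value size-$q$ bundles an agent needs may overlap heavily across agents, so that committing one agent's core destroys another's only good option; controlling this requires an exchange/augmenting argument exploiting the submodular exchange property to reroute items, together with a potential function (for example the number of still-unsatisfied size-$q$ recipients, or a lexicographic measure of deficiencies) that strictly decreases under each repair. Matching the known $2/3$ upper bound for two submodular agents (and the tightness entry in Table~\ref{table:summaryResults}) strongly suggests that the analysis must be tight, so I would expect the bookkeeping in this exchange argument---rather than any single inequality---to be the delicate part.
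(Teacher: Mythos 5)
Your proposal correctly handles the preliminaries (the $m<2n$ case via Proposition~\ref{prop:fewItems}, the observation that the maximin partition is cardinality-balanced with its minimum attained on a size-$q$ bundle, and that any agent receiving $q+1$ items is automatically fully satisfied), but the core of the argument is missing, and you say so yourself: you never establish that a ``safe'' agent--bundle pair exists at every stage of your repair procedure, and that existence claim \emph{is} the theorem. Worse, the one quantitative tool you propose to produce the factor $2/3$ --- the averaging lemma giving $T\subseteq S$ with $|T|=\lceil \tfrac23 q\rceil$ and $v(T)\ge\tfrac23 v(S)$ --- is vacuous precisely in the hardest regime. The paper isolates $2n\le m<3n$, i.e.\ $q=2$, as the challenging case; there $\lceil\tfrac23 q\rceil=2=q$, so your ``core'' is the whole bundle and the lemma yields nothing, while the only proper subset (a singleton) is guaranteed only half the bundle's value. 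So exactly where conflicts between agents' desired bundles actually bite, your mechanism for extracting a $2/3$-fraction does not exist, and the exchange/potential-function argument you defer to is not sketched even at the level of an invariant. (Your reducibility step --- that deleting one agent and $\le q$ items cannot lower the remaining agents' maximin shares under non-additive leveled valuations --- is also asserted rather than proved.)

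The paper resolves the conflict problem with a different, conflict-free device. Starting from agent $1$'s MMS partition and greedily handing out bundles valued at least $2/3$, one is left with a set of agents who value \emph{every} remaining pair below $2/3$. The key observation (Claim~\ref{claim:reshuffle}) is that if an agent values both $\{a,b\}$ and $\{c,d\}$ below $2/3$, then submodularity together with leveledness (any $3$-set exceeds $\mu_i=1$) forces \emph{every crossing pair} to exceed $2/3$; hence a single cyclic shift of the leftover pairs simultaneously satisfies all leftover agents, with no induction, no bookkeeping, and no reducibility lemma. For $m\ge 3n$ the analogous Claim~\ref{claim:easyVersion} shows each agent can dislike at most one bundle of the partition, so greedy assignment suffices. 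If you want to salvage your route, the statement you must actually prove is of this ``if you reject everything offered, then all the reshuffled offers are good'' form; the averaging subset lemma and a generic greedy-plus-induction scheme will not get you there.
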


\begin{proof}
  We split the proof into two different cases\footnote{There is also a
    third case, where $m<2n$, but this is treated separately by
    Proposition \ref{prop:fewItems} which holds for general leveled valuations and provides
    exact MMS.}  depending on the range of $m$, namely for
  $2n\leq m <3n$ and $m\geq 3n$.  In both cases, the main element is a
  simple combinatorial argument that guarantees the abundance of
  bundles of size $\lfloor m/n \rfloor$ with values higher than $2/3$
  (Claim~\ref{claim:reshuffle} and Claim~\ref{claim:easyVersion}
  respectively). At a high level we present a partition of the items
  which is MMS for one of the agents, and we offer its bundles to the
  rest of the agents. The two claims secure that either this partition
  achieves the desired MMS guarantee of $2/3$ for all agents (case 2) or
  that there is some reallocation of the goods that achieves this (case 1).

\textbf{Case 1: $2n\leq m <3n$.} Notice that in this case, which is the more challenging one, the MMS
value for each agent is achieved by a subset of items of size equal to
$2$. 

\begin{claim} \label{claim:reshuffle} Fix an agent $i\in N$, and an
  arbitrary subset  $M'=\{a,b,c,d\}$ of four items of $M$. Let
  $S=\{a,b\}, T=\{c,d\}$ be an arbitrary partition of $M'$ into two equal
  sized sets. Let $\mathcal{U}$ be the family of all pairs of items that
  intersect both $S$ and $T$, i.e.  $ {\mathcal U}=\left\{\{a,c\},\{a,d\},\{b,c\},\{b,d\}\right\}$. If $\max\{v_i(S),v_i(T)\}<2/3$ then $v_i(U)> 2/3$ for every $U\in {\mathcal U}$. 
\end{claim}

\begin{proof}
    Take a $U\in {\mathcal U}$, without loss of generality let this be $U=\{a,c\}$. Recall that the MMS of agent $i$ is achieved by a set of size equal to $2$. We obtain 
    \begin{align*}
        v_i(S) + v_i(U) \geq v_i(S\cup \{c\})+v_i(\{a\})> 1+ v_i(\{a\}), 
    \end{align*}
    where the first inequality holds due to submodularity and the
    second due to the fact that $|S\cup \{c\}|=3$ and hence (by
    definition of leveled valuations) the value obtained by this set must be strictly greater than her MMS value. Similarly, we obtain
    \begin{align*}
        v_i(T) + v_i(U) \geq v_i(T\cup \{a\})+v_i(\{c\})> 1+ v_i(\{c\}). 
    \end{align*}
By summing these up, and by using again submodularity we obtain 
    \begin{equation*}
        v_i(S) + v_i(T) + v_i(U) > 2.
    \end{equation*}
Since $\max\{v_i(S),v_i(T)\}<2/3$ we conclude that $v_i(U)>2/3$ as needed.
\end{proof}

We now proceed with the proof of Case 1, thus showing that every instance where agents have submodular leveled valuations, $2n\leq m<3n$ admits a $2/3$-MMS allocation. Our goal is to provide a $2/3$-MMS allocation $(A^*_1,\ldots, A^*_n)$,
  such that $v_i(A^*_i)\geq 2/3$ for every agent $i$.  Initially we
  consider a partition $A=(A_1,\ldots, A_n)$ which is $2/3$-MMS
  allocation from the perspective of agent $1$, i.e.,
  $v_1(A_j)\geq 2/3$, for every $j=1,\ldots, n$. Notice that due to
  the fact that the valuations are leveled, it must be that $2\leq |A_j|\leq 3$ for all $j$. We
  offer these bundles to agents $2,\ldots, n$; as long as there exists
  an agent $i>1$ and a bundle $A_j$ such that $v_i(A_j) \geq 2/3$,
  then we allocate $A_j$ to $i$, i.e. $A^*_i:=A_j$ and we proceed with
  the remaining agents and bundles\footnote{If there are more than one
    such agents, we break ties lexicographically.}.

  If at the end of this stage each of the agents $2,\ldots, n$ is
  allocated an $A_j$ set then we allocate the remaining bundle to
  agent $1$ and the resulting allocation is $2/3$-MMS.

  Otherwise, there will be a subset of $k\leq n-1$ agents
  $N'\subseteq \{2,\ldots, n\}$ which value each of the remaining
  $k+1$ bundles
  $A_{i_1}=\{g_1,g_2\},\ldots,
  A_{i_k}=\{g_{2k-1},g_{2k}\},A_{i_{k+1}}=\{g_{2k+1},g_{2k+2}\}$ with
  value less than $2/3$. Notice that each of these bundles must have
  cardinality equal to 2, as all bundles of three items are valued
  higher than the MMS share by all agents. Let
  $M'=A_{i_1} \cup \ldots \cup A_{i_k} \cup A_{i_{k+1}}$ be the subset
  of the remaining items (it could be $M'=M$).

  But then, we can utilize Claim~\ref{claim:reshuffle} to reshuffle
  the items; If $k\geq 2$, then we can assign $A_{i_{k+1}}$ to agent $1$
  and partition the rest into $k$ pairs as follows:
  $B_1=\{g_2,g_3\},B_2=\{g_4,g_5\}, \ldots, B_k=\{g_{2k},g_1\}$. By
  Claim~\ref{claim:reshuffle} the value of each agent in $N'$ for
  every $B_j$ bundle is strictly higher than $2/3$.  If $k=1$, then
  $M'$ has four items, and again by inspection we can verify that either $B_1=\{g_2,g_3\},B_2=\{g_4,g_1\}$ or $B'_1=\{g_2,g_4\},B'_2=\{g_3,g_1\}$ will satisfy the conditions for both agents.  

\textbf{Case 2: $m \geq 3n$} The argument can be further simplified when dealing with instances of $m \geq 3n$.

\begin{claim} \label{claim:easyVersion}
    Fix an agent $i$ and two disjoint subsets $S,T$ of $M$ with $|S| =|T|=\lfloor m/n \rfloor$. If $v_i(S) < 2/3$, then $v_i(T) > 2/3$.
\end{claim}

\begin{proof}
    Take any pair of items $\{s,t\}$ with $s\in S$ and $t\in T$. By arguments similar to those used in the proof of Claim~\ref{claim:reshuffle} we obtain 
    \begin{align*}
        v_i(S) + v_i(\{s,t\}) \geq v_i(S\cup \{t\})+v_i(\{s\})> 1+ v_i(\{s\}), 
    \end{align*}
and similarly 
    \begin{align*}
        v_i(T) + v_i(\{s,t\}) \geq v_i(T\cup \{s\})+v_i(\{t\})> 1+ v_i(\{t\}). 
    \end{align*} By summing up those inequalities we get 
    \begin{align*}
    \nonumber
        v_i(S) + v_i(T) + v_i(\{s,t\}) > 2.
    \end{align*}

Finally by observing that $v_i(\{s, t\}) \leq \frac{1}{2}(v_i(S) + v_i(T))$ since the valuations are leveled we conclude that $v_i(S) + v_i(T)  > 4/3$ and the claim follows.
\end{proof}

We treat the second case in a similar but simpler fashion. We show that every instance where all agents have submodular leveled valuations, $3n\leq m$ admits a $2/3$-MMS allocation. As in case 1, we consider a partition $A=(A_1,\ldots, A_n)$
  which is an MMS allocation from the perspective of agent $1$ and
  we offer these bundles to agents $2,\ldots, n$.
  Claim~\ref{claim:easyVersion} guarantees that for each agent there
  is at most one bundle with value less than $2/3$, hence at least $n-1$ bundles with value higher than $2/3$. We let them choose sequentially in a greedy fashion, and we allocate the remaining bundle to agent 1 who values it at least as high as $2/3$. 
  \end{proof}

\subsubsection{Lower Bound for Two Submodular Agents} We complement the negative example of \citet{KulkarniKulkarniMehtaMMS} with a positive result. 

\begin{theorem} \label{thm:2submod}
The problem of fairly allocating goods to two submodular agents admits a $2/3$-MMS allocation.
\end{theorem}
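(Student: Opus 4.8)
```latex
The plan is to extend the proof technique from Theorem~\ref{thm:existence-submodular} to the case of two submodular agents who are \emph{not} necessarily leveled. The key insight I would exploit is that with only two agents, the problem has a natural cut-and-choose structure: we can take the MMS-defining partition of one agent and try to offer one of its parts to the other agent. First I would fix agent $1$ and consider her maximin partition $(P_1, P_2)$, so that $\min\{v_1(P_1), v_1(P_2)\} = \mu_1 = 1$. The difficulty compared to the leveled case is that we can no longer assume bundles have size exactly $\lfloor m/n \rfloor$, nor that a single item is worth almost nothing; the leveled structure was precisely what powered the strict inequalities $v_i(S \cup \{t\}) > 1$ in Claims~\ref{claim:reshuffle} and~\ref{claim:easyVersion}. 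So the main obstacle will be replacing those leveled-driven inequalities with purely submodular arguments.

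My approach would be to first dispatch the case where one of agent $2$'s values for the two parts is already large. If $\max\{v_2(P_1), v_2(P_2)\} \geq 2/3$, then we give that part to agent $2$ and the other to agent $1$; since agent $1$ values \emph{both} parts at least $1 > 2/3$, this is a $2/3$-MMS allocation immediately. The hard case is when $v_2(P_1) < 2/3$ and $v_2(P_2) < 2/3$ simultaneously. Here I would want to reshuffle items between $P_1$ and $P_2$ to manufacture a bundle that agent $2$ values at least $2/3$, while ensuring agent $1$ still gets value at least $2/3$ from the complement. The natural tool is submodularity applied to carefully chosen swaps: moving an item $g$ from $P_1$ to $P_2$ (or vice versa) and bounding the marginal changes for both agents.

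The central technical step, which I expect to be the main obstacle, is proving an analogue of the reshuffling claim without leveledness. I would aim to show that if agent $2$ values both halves below $2/3$, there must exist a mixed bundle (one taking some items from each part) that agent $2$ values above $2/3$; the submodularity of $v_2$ gives $v_2(P_1) + v_2(P_2) \geq v_2(P_1 \cup P_2) + v_2(\emptyset) = v_2(M)$, and one can try to argue that a single exchange boosts the value. Simultaneously I must control agent $1$'s value on the complementary bundle, again via submodularity of $v_1$ around her maximin partition. The delicate part is that, unlike the leveled setting where any size-boosting set automatically exceeds the MMS threshold, here I must track exact marginal values and possibly do a case analysis on which single item, when swapped, yields the desired guarantee for both agents simultaneously. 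I anticipate that a clean argument reduces to showing that among the candidate swaps, at least one produces a pair of bundles where agent $1$ retains $\geq 2/3$ and agent $2$ gains $\geq 2/3$, which should follow by summing the relevant submodular inequalities and pigeonholing, much as the final step of Claim~\ref{claim:reshuffle} extracted $v_i(U) > 2/3$ from the aggregate bound $v_i(S) + v_i(T) + v_i(U) > 2$.
```
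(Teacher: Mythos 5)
There is a genuine gap. Your easy case is fine: if $\max\{v_2(P_1),v_2(P_2)\}\geq 2/3$ you are done, since agent $1$ values both parts at $\geq 1$. But the entire difficulty lives in the hard case, and there your argument is only a plan: the claim that ``a single exchange boosts the value'' so that some swap simultaneously gives agent $2$ at least $2/3$ and leaves agent $1$ with at least $2/3$ is neither proved nor, as stated, likely to be true. The bundle that rescues the hard case can differ from $P_1$ in arbitrarily many items, so no argument based on moving one item (or any bounded number of items) between $P_1$ and $P_2$ will suffice in general. Moreover, the inequality you invoke, $v_2(P_1)+v_2(P_2)\geq v_2(M)$, only yields $v_2(P_1)+v_2(P_2)\geq 1$ (since $v_2(M)\geq \mu_2=1$ but need not be $\geq 2$), which is perfectly consistent with both values being below $2/3$; it gives you no contradiction and no handle on which mixed bundle to build.

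The idea you are missing is to bring in agent $2$'s own MMS partition $(T,\bar T)$ alongside agent $1$'s $(S,\bar S)$ and intersect them into four cells $A=S\cap T$, $B=S\setminus T$, $C=\bar S\cap\bar T$, $D=T\setminus S$. The paper offers each agent the two parts of the \emph{other} agent's partition; if either accepts a part worth $\geq 2/3$ to her, the other agent keeps a part of her own MMS partition (worth $\geq 1$) and we are done. If all four offers are rejected, submodularity of $v_1$ applied against agent $2$'s partition gives $v_1(T)+v_1(A\cup C)\geq v_1(\bar S\cup A)+v_1(A)$ and $v_1(\bar T)+v_1(A\cup C)\geq v_1(S\cup C)+v_1(C)$; summing, using $v_1(\bar S\cup A),v_1(S\cup C)\geq 1$ (monotonicity plus the fact that $(S,\bar S)$ is agent $1$'s MMS partition) and $v_1(A)+v_1(C)\geq v_1(A\cup C)$, one gets $v_1(T)+v_1(\bar T)+v_1(A\cup C)\geq 2$, hence $v_1(A\cup C)>2/3$ once $v_1(T),v_1(\bar T)<2/3$. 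Symmetrically $v_2(B\cup D)>2/3$, and $(A\cup C,\,B\cup D)$ is a partition of $M$, so the ``diagonal'' allocation works. Your instinct to mimic the aggregate bound of Claim~\ref{claim:reshuffle} is right, but without the second agent's partition there is no candidate bundle against which to write those submodular inequalities, and that is precisely where your sketch stalls.
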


\begin{proof}
    Let us consider agents $1$ and $2$ and their respective MMS partitions denoted as $P_1 = (S, \bar S)$ and $P_2 = (T, \bar T)$. We define the following sets: $A = S \cap T$, $B = S \setminus T$, $C = \bar S \cap \bar T$, $D = T \setminus S$, as depicted below: \\

    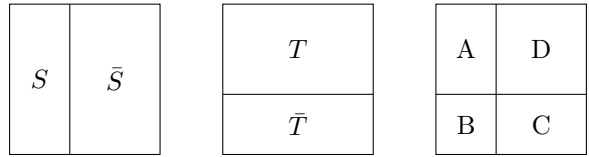
\begin{figure}[!h]
    \centering

    \begin{tikzpicture}
      \draw (0,0) -- (2,0) -- (2,2) -- (0,2) -- cycle; 
      \draw (0.8,0) -- (0.8,2); 
      \node at (0.4,1) {$S$}; 
      \node at (1.4,1) {$\bar S$}; 
    \end{tikzpicture}
    \qquad
    \begin{tikzpicture}
      \draw (0,0) -- (2,0) -- (2,2) -- (0,2) -- cycle; 
      \draw (0,0.8) -- (2,0.8); 
      \node at (1,1.4) {$T$}; 
      \node at (1,0.4) {$\bar T$};
    \end{tikzpicture}
    \qquad
    \begin{tikzpicture}
      \draw (0,0) -- (2,0) -- (2,2) -- (0,2) -- cycle; 
      \draw (0.8,0) -- (0.8,2); 
      \draw (0,0.8) -- (2,0.8);  
      \node at (0.4,1.4) {A}; 
      \node at (0.4,0.4) {B}; 
      \node at (1.4,1.4) {D};
      \node at (1.4,0.4) {C}; 
    \end{tikzpicture} \\
    \caption{A graphic interpretation of the proof given below. The first two pictures depict the MMS partitions of agent $1$ and agent $2$ respectively.}
\end{figure}

 where the vertical line represents agent $1$'s MMS partition, while the horizontal line represents agent $2$'s respective MMS partition. Therefore, the agents' MMS allocations can be written as $P_1 = (S, \bar S) = (A \cup B, C \cup D)$ and $P_2 = (T, \bar T) = (A \cup D, B \cup C)$.
 
 Following an argument similar to those used in the proof of Claim \ref{claim:reshuffle}, we have the following submodularity inequalities:
\begin{align*}
    & v_1(T) + v_1(A \cup C) \geq v_1(\bar S \cup A) + v_1(A) \\
    & v_1(\bar T) + v_1(A \cup C) \geq v_1(S \cup C) + v_1(C)
\end{align*}
Summing up the two inequalities and using submodularity again, we obtain:
\begin{align*}
    & v_1(T) + v_1(\bar T) + v_1(A \cup C) \geq 2
\end{align*}
Note that $v_1(A) + v_1(C) \geq v_1(A \cup C) + v_1(A \cap C)$ but $A \cap C = \emptyset$ by construction. The last inequality holds due to the fact that $S$ and $\bar S$ constitute the proposed MMS allocation of agent 1, and thus $v_1(S \cup C) \geq 1$ and $v_1(\bar S \cup A) \geq 1$. Similarly, for agent 2:
\begin{align*}
    v_2(S) + v_2(\bar S) + v_2(B \cup D) \geq 2
\end{align*}
We offer the bundles $T$ and $\bar T$ to agent 1 and the bundles $S$ and $\bar S$ to agent 2 respectively. Therefore, we arrive at the following scenario: either an agent accepts the offer, considering one of the proposed bundles to be valuable at least $2/3$, or both agents reject the proposed allocations. In the first case, one of the agents secures her full MMS while the other one is contented with at least $2/3$ of her MMS. In the latter situation, we identify an additional allocation $P = (\{A \cup C, B \cup D\})$ that is deemed acceptable by both agents, meaning they both value their allocated bundles at least as high as a $2/3$ fraction of their maximin shares due to the aforementioned inequalities. This completes the proof. 
\end{proof}

\subsection{Subadditive Leveled Valuations}\label{sec:subadditive-mms}
We follow a similar approach to demonstrate the existence of $1/2$-MMS allocations under subadditive valuations. Moreover, we provide a general tight upper bound by tweaking a known impossibility result due to \citet{GhodsiArtIntBeyondAdditive}. 

\begin{lemma} \label{lemma:subadditive}
    Fix an agent $i$ and two disjoint subsets $S,T$ of $M$  with $|S| =|T|=\lfloor m/n \rfloor$ and let $m \ge n$. If $v_i(S) < 1/2$, then $v_i(T) > 1/2$.
\end{lemma}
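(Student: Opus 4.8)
The plan is to mirror the argument of Claim~\ref{claim:easyVersion}, but with the submodularity step replaced by subadditivity and the threshold lowered from $2/3$ to $1/2$. The structure should be: pick a single cross pair of items, use subadditivity to bound $v_i(S)+v_i(\{s,t\})$ from below by the value of the enlarged set $S\cup\{t\}$, invoke the leveled property to say that this enlarged set (having one more item than the MMS-sized bundle) exceeds the MMS value of $1$, and do the same on the $T$ side; then sum and compare $v_i(\{s,t\})$ against the average of $v_i(S)$ and $v_i(T)$.

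Concretely, fix any $s\in S$ and $t\in T$. Since $v_i$ is subadditive, $v_i(S)+v_i(\{t\})\geq v_i(S\cup\{t\})$, and because $|S\cup\{t\}|=\lfloor m/n\rfloor+1$ is strictly larger than the cardinality of an MMS-optimal bundle, the leveled property gives $v_i(S\cup\{t\})>\mu_i^n(M)=1$. Hence $v_i(S)+v_i(\{t\})>1$, and symmetrically $v_i(T)+v_i(\{s\})>1$. Adding these,
\begin{align*}
    v_i(S)+v_i(T)+v_i(\{s\})+v_i(\{t\})>2.
\end{align*}
Now $v_i(\{s\}),v_i(\{t\})\leq v_i(\{s,t\})$ by monotonicity, and the leveled property gives $v_i(\{s,t\})\leq\tfrac12\bigl(v_i(S)+v_i(T)\bigr)$ exactly as in Claim~\ref{claim:easyVersion} (a singleton-sized-blowup comparison using $|\{s,t\}|=2\leq\lfloor m/n\rfloor$). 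Combining these yields $2\bigl(v_i(S)+v_i(T)\bigr)>2$, i.e. $v_i(S)+v_i(T)>1$, so if $v_i(S)<1/2$ then $v_i(T)>1/2$, as required.

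The main subtlety to get right is the final step bounding the cross-pair value $v_i(\{s,t\})$ by the average of the two big bundles' values. Unlike the submodular case in Claim~\ref{claim:easyVersion}, I cannot directly reuse a submodularity decomposition here, so I must lean purely on the leveled structure: since $\lfloor m/n\rfloor\geq 2$ in this regime, the size-$2$ set $\{s,t\}$ is no larger than either of $S,T$, and the leveled comparison of values across cardinalities delivers $v_i(\{s,t\})\le\min\{v_i(S),v_i(T)\}\le\tfrac12(v_i(S)+v_i(T))$. I would state this cardinality bookkeeping carefully, since the whole estimate collapses if $\lfloor m/n\rfloor$ could be $1$ — but that case corresponds to $m<2n$ and is handled separately by Proposition~\ref{prop:fewItems}, so within the relevant regime the inequality is safe.
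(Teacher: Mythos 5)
Your overall strategy is workable, but there is a concrete gap in the final step, and it is worth noting that the paper's own proof is far more direct: it applies subadditivity once to $S$ and $T$ themselves, giving $v_i(S)+v_i(T)\geq v_i(S\cup T)>1$, where the last inequality holds because $|S\cup T|=2\lfloor m/n\rfloor$ strictly exceeds the cardinality $\lfloor m/n\rfloor$ of an MMS-achieving bundle. That one line already yields the conclusion for every value of $\lfloor m/n\rfloor\geq 1$, with no cross pairs, no singletons, and no case analysis.

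The gap in your version is the inequality $v_i(\{s,t\})\leq\tfrac12\bigl(v_i(S)+v_i(T)\bigr)$. You justify it by a leveled comparison using $|\{s,t\}|=2\leq\lfloor m/n\rfloor$, but the leveled property only orders bundles of \emph{strictly different} cardinalities; when $\lfloor m/n\rfloor=2$ (i.e.\ in the regime $2n\leq m<3n$, which the lemma must cover and which is precisely the harder case in Theorem~\ref{thm:existence-submodular}), the sets $\{s,t\}$, $S$, and $T$ all have two elements and no comparison is available. Indeed $\{s,t\}$ could be an MMS-achieving pair with $v_i(\{s,t\})=1$ while $v_i(S)=v_i(T)=0.4$, so the claimed bound can fail outright. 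Note that in Claim~\ref{claim:easyVersion} this step is safe only because there $m\geq 3n$ forces $\lfloor m/n\rfloor\geq 3>2$. Your argument is repairable without abandoning its structure: skip the detour through $v_i(\{s,t\})$ and bound the singletons directly, since $|\{s\}|=1<\lfloor m/n\rfloor$ gives $v_i(\{s\})<v_i(S)$ and $v_i(\{t\})<v_i(T)$ whenever $\lfloor m/n\rfloor\geq 2$; substituting into $v_i(S)+v_i(T)+v_i(\{s\})+v_i(\{t\})>2$ then yields $v_i(S)+v_i(T)>1$. Still, the direct one-line subadditivity argument is preferable, as it also covers $\lfloor m/n\rfloor=1$ without deferring to Proposition~\ref{prop:fewItems}.
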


\begin{proof}
    By subadditivity, we have $      v_i(S) + v_i(T) \geq v_i(S \cup T) > 1 \nonumber$. Recall that the MMS of an agent $i$ is attained by a bundle of size $\lfloor m/n \rfloor$, thus $S \cup T$ has strictly more items than any MMS bundle. Therefore, $\max\{v_i(S),v_i(T)\}>1/2$.
\end{proof}

\begin{theorem}
    The problem of fairly allocating goods under subadditive leveled valuation functions admits a 1/2-MMS allocation. 
\end{theorem}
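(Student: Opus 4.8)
The plan is to mirror the two-case structure of Theorem~\ref{thm:existence-submodular}, but now relying on Lemma~\ref{lemma:subadditive} in place of the submodular claims, and to handle the easy regime $m<2n$ by invoking Proposition~\ref{prop:fewItems}, which already delivers an exact (hence $1/2$-) MMS allocation under general leveled valuations. So I would assume $m\ge 2n$, in which case every agent's MMS bundle has size exactly $\lfloor m/n\rfloor\ge 2$, and Lemma~\ref{lemma:subadditive} is directly applicable to any two disjoint bundles of this size.

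First I would fix agent $1$ and take a partition $A=(A_1,\ldots,A_n)$ that is an exact MMS partition from her perspective, so $v_1(A_j)\ge 1$ for every $j$; by the leveled assumption each $A_j$ has $\lfloor m/n\rfloor$ or $\lceil m/n\rceil$ items. The key observation is Lemma~\ref{lemma:subadditive}: for any agent $i$, among any collection of pairwise disjoint bundles each of size $\lfloor m/n\rfloor$, at most one can have value below $1/2$. Since the partition $A$ consists of $n$ pairwise disjoint bundles, each agent $i\in\{2,\ldots,n\}$ finds at least $n-1$ of these bundles acceptable (value $>1/2$), where I first trim any bundle of size $\lceil m/n\rceil$ down to $\lfloor m/n\rfloor$ items for the purpose of applying the lemma and note that adding back the extra item only increases value by monotonicity.

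Next I would run the same greedy offering procedure as in Case~2 of Theorem~\ref{thm:existence-submodular}: let agents $2,\ldots,n$ sequentially pick an acceptable bundle. Because each such agent has at most one unacceptable bundle, a simple counting/matching argument guarantees the sequential assignment never gets stuck — when it is agent $i$'s turn there are at least two bundles remaining and at most one is unacceptable to her, so an acceptable choice always exists. The leftover bundle is then given to agent $1$, who values it at least $1$ since it is one of her own MMS bundles. This yields an allocation in which every agent receives value at least $1/2$, as required.

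The main obstacle to watch is the $2n\le m<3n$ regime, where MMS bundles have size exactly $2$ and one might worry, as in the submodular case, that a reshuffling step (analogous to Claim~\ref{claim:reshuffle}) is needed rather than a plain greedy assignment. Here the subadditive Lemma~\ref{lemma:subadditive} is cleaner than its submodular counterpart because it already applies to \emph{arbitrary} disjoint bundles of size $\lfloor m/n\rfloor$ (not just to the four-item reshuffling gadget), so the single-case greedy argument above should suffice for all $m\ge 2n$ without any item reshuffling. I would double-check the counting argument that the greedy procedure terminates successfully, since that is the only place where the ``at most one bad bundle per agent'' guarantee must be combined across agents.
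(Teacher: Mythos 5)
Your proposal is correct and follows essentially the same route as the paper: fix one agent's exact MMS partition, use Lemma~\ref{lemma:subadditive} to conclude that every other agent values at most one of the $n$ disjoint bundles below $1/2$, let agents $2,\ldots,n$ pick greedily, and give the leftover bundle to the fixed agent. The extra details you supply (deferring $m<2n$ to Proposition~\ref{prop:fewItems}, trimming size-$\lceil m/n\rceil$ bundles before applying the lemma, and the counting argument that the greedy pass never gets stuck) are points the paper's one-line proof leaves implicit, so they strengthen rather than change the argument.
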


\begin{proof}
    Fix an agent $i$ and let $A=(A_1,\ldots, A_n)$ be her MMS allocation. We offer each agent two arbitrary bundles among the ones still available in $A$. According to Lemma \ref{lemma:subadditive}, one of the bundles is valued at least $1/2$. We let the agents pick bundles from $A$ sequentially in a greedy fashion; we allocate the remaining bundle to agent $i$.
\end{proof}

\subsection{Upper Bounds} \label{sec:upperBounds}

In this section, we provide impossibility results for approximating the MMS value for submodular (Corollary~\ref{MMS:impossiblity2sub}), XOS (Corollary~\ref{cor:upperXOS}) and subadditive (Corollary~\ref{cor:upperSubadditive}) leveled valuations which are simple adaptations of known impossibility constructions. In fact, the only necessary modification in their constructions, is to adjust the valuations to be strictly increasing in the number of items in order to make them leveled valuations. 

 \citet{KulkarniKulkarniMehtaMMS}  showcased a construction that gives an impossibility result on the existence of better than $2/3$-MMS allocations for two submodular agents. In fact, they show that the result holds even for a smaller subclass of submodular valuations, namely OXS functions. One can easily adapt their construction to make it work for leveled valuations (Corollary \ref{MMS:impossiblity2sub}).
 
 For XOS functions, we adapt the construction given in \citep{akrami2023randomizedXOS} (Table \ref{table:impossXOS}). For subsets $S$ with $\lvert S \rvert > 2$, we have $v_i(S) = 2 + \epsilon$ for $\lvert S \rvert = 3$ and $v_i(S) = 2 + 2\epsilon$ for $\lvert S \rvert = 4$. We denote the set of additive valuations of each agent as $a_1$ and $a_2$ for agent $1$, and $a_3, a_4$ for agent $2$ respectively. The XOS function receives the pointwise maximum value for each subset, i.e. $v_1(b_1, b_2) = \max \{a_1(b_1,b_2), a_2(b_1,b_2)\} = 2$. It is easy to see that the maximin value for each agent is equal to $2$. However, there is no way to guarantee MMS for both agents; one of the agents has to receive a total value of $1+\epsilon$ for an arbitrarily small $\epsilon > 0$. Finally, in order to make the counterexample general (for $n$ agents), we follow the same rationale as in Corollary \ref{cor:upperSubadditive}, adding multiple copies of \say{type 1} agents. The lower bound for XOS valuations follows from Section \ref{sec:subadditive-mms}.

\begin{table}[h]
\begin{center}
\begin{tabular}{c | c c c c }
      & $b_1$ & $b_2$ & $b_3$ & $b_4$ \\
     \hline
     $a_1$ & 1 & 1 & $\epsilon$ & $\epsilon$ \\ 
     $a_2$ & $\epsilon$ & $\epsilon$ & 1 & 1 \\
     \hline
     $a_3$ & 1 & $\epsilon$ & $\epsilon$ & 1 \\
     $a_4$ & $\epsilon$ & 1 & 1 & $\epsilon$ \\
\end{tabular}
\end{center}
\caption{Upper bound for XOS functions}
\label{table:impossXOS}
\end{table}
For subadditive valuations, one can adapt the construction given in \citep{GhodsiArtIntBeyondAdditive}.

\begin{corollary} \label{MMS:impossiblity2sub}
There exists an instance of the fair allocation problem with two submodular leveled agents in which no allocation is strictly better than 2/3-MMS \citep{KulkarniKulkarniMehtaMMS}.
\end{corollary}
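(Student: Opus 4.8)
The plan is to exhibit the explicit four-item, two-agent instance indicated by the (commented) construction preceding the corollary, which is precisely the leveled adaptation of the OXS gadget of \cite{KulkarniKulkarniMehtaMMS}. Concretely, I would take $M=\{a,b,c,d\}$ and let both agents assign the singleton values $v(\{a\})=v(\{c\})=1/3+\epsilon$ and $v(\{b\})=v(\{d\})=2/3$. The two agents differ only on pairs: agent $1$ values $\{a,b\},\{b,d\},\{c,d\}$ at $1$ and the remaining three pairs at $2/3+\epsilon$, whereas agent $2$ values $\{a,d\},\{b,c\},\{b,d\}$ at $1$ and the other three at $2/3+\epsilon$; every triple is worth $1+\epsilon$ and the grand bundle $1+2\epsilon$ for both. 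The construction is the only object that needs to be produced — everything after is verification.

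Next I would check the two structural requirements. Leveledness is immediate from the value bands: all singletons lie in $[1/3+\epsilon,\,2/3]$, all pairs in $[2/3+\epsilon,\,1]$, all triples equal $1+\epsilon$, and the full set is $1+2\epsilon$, so any larger bundle strictly dominates any smaller one. Submodularity I would verify through the non-increasing–marginals characterization, computing the marginal of each item along the relevant chains of subsets and checking the diminishing-returns inequalities for each agent; since we merely perturbed the OXS functions of \cite{KulkarniKulkarniMehtaMMS} by additive $\epsilon$-terms respecting the underlying matching structure, these inequalities survive. Finally, I would record the maximin shares: agent $1$ attains value $1$ via the partition $\{a,b\}\mid\{c,d\}$ and agent $2$ via $\{a,d\}\mid\{b,c\}$, and no balanced split does better, so $\mu_1=\mu_2=1$ as needed for normalization.

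The heart of the argument is then a finite case check over all complete allocations of the four items into two bundles. Any unbalanced split hands some agent either the empty set or a singleton, whose value is at most $2/3$ by leveledness, so it cannot beat $2/3$. For the balanced $(2,2)$ splits I would enumerate the three ways to pair the items, together with the two orientations of each pairing between the agents; in each of the six resulting allocations exactly one agent receives a pair she values at $2/3+\epsilon$. For instance, assigning $\{a,b\}$ to agent $1$ forces $\{c,d\}$ onto agent $2$, who values it at $2/3+\epsilon$, and symmetrically in the remaining five cases. Hence every allocation leaves some agent with value at most $2/3+\epsilon$ against a maximin share of $1$, so the best approximation ratio attainable on this instance is $2/3+\epsilon$; letting $\epsilon\to 0$ shows that no guarantee strictly better than $2/3$-MMS is possible.

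I would expect the only mildly delicate point to be the submodularity check after leveling: one must confirm that the additive $\epsilon$-perturbations used to separate the cardinality bands do not violate diminishing returns on any subset chain, and this must be carried out for both agents. Everything else reduces to the six-case enumeration above, which is entirely routine.
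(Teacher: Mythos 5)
Your construction coincides exactly with the leveled adaptation of the Kulkarni--Kulkarni--Mehta gadget that the paper has in mind (the paper only sketches it, deferring to the cited work), and your verification is sound: the values are banded so the function is leveled, the diminishing-marginals check goes through for both agents, $\mu_1=\mu_2=1$, and the six balanced allocations plus the unbalanced ones all leave some agent with at most $2/3+\epsilon$, so letting $\epsilon\to 0$ gives the bound. This is essentially the same argument as the paper's, carried out in more detail.
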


\begin{corollary}\label{cor:upperXOS}
    For any $n \ge 2$ there exists an instance of the fair allocation problem with $n$ XOS leveled agents where no allocation yields more than $1/2$-MMS to the agents \citep{GhodsiArtIntBeyondAdditive}.
\end{corollary}

\begin{corollary} \label{cor:upperSubadditive}
    For any $n \geq 2$ there exists an instance of the fair allocation problem with $n$ subadditive leveled agents where no allocation yields more than $1/2$-MMS to the agents \citep{GhodsiArtIntBeyondAdditive}.
\end{corollary}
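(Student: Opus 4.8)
The plan is to construct an explicit instance and verify that it is subadditive, leveled, and admits no allocation guaranteeing more than $1/2$-MMS. The statement we must prove is Corollary~\ref{cor:upperSubadditive}, which asserts that for every $n \geq 2$ there is an $n$-agent subadditive leveled instance where no allocation beats $1/2$-MMS. Since the corollary cites \cite{GhodsiArtIntBeyondAdditive}, the intended route is to take the known subadditive impossibility construction of Ghodsi et al.\ (which already rules out better than $1/2$-MMS for subadditive valuations) and modify the valuations on the low-cardinality bundles so that the leveled property also holds, without destroying either subadditivity or the $1/2$ obstruction.

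First I would recall the structure of the base construction: it uses $n-1$ agents of a common ``type~$1$'' together with one special agent, so that any complete allocation forces some agent onto a bundle of value at most $1/2 \cdot \mu_i + o(1)$. The core of such constructions is a set of $2n$ items together with a designated family of ``good'' pairs, where each agent values a specific collection of pairs at $1$ (its MMS partition consists of such pairs, so $\mu_i = 1$), while overlapping or ``crossing'' pairs are valued much lower. The pigeonhole/counting argument — essentially the same one alluded to in the Example earlier in the excerpt, where ``all but one agent share the same valuation function'' — shows that the good pairs of the different agents cannot be simultaneously satisfied, so some agent is stuck with a low-value bundle.

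Next I would make the valuations leveled. The standard trick, visible in the excerpt's own Example and in the commented-out tables, is to assign every singleton a tiny value $\epsilon$, every ``bad'' pair a value just above the singleton level (e.g.\ $\epsilon + \delta$), keep every ``good'' pair at value $1$, and set $v_i(S) = |S| + \epsilon$ (or any strictly increasing-in-cardinality schedule dominating the pair values) for all $|S| > 2$. This guarantees strict monotonicity across cardinality levels: singletons beat the empty set, all pairs (good or bad) beat all singletons, triples beat all pairs, and so on. I would then check that subadditivity $v_i(S) + v_i(T) \geq v_i(S \cup T)$ survives this assignment; because the large-bundle values grow only additively at rate roughly $1$ per item while the pair values are bounded by $1$, subadditivity reduces to a handful of cardinality cases that hold for $\epsilon, \delta$ small. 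Crucially, driving $\epsilon \to 0$ keeps the valuations leveled while making the guaranteed bundle worth $1/2 + O(\epsilon)$, so the approximation ratio degrades to exactly $1/2$.

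The main obstacle will be ensuring that the modifications simultaneously preserve all three properties in the worst-case cardinality pairings: I must confirm that subadditivity is not violated when a ``good'' pair (value~$1$) is split across the union operation, and that the leveled inequalities remain strict at the boundary between good pairs (value~$1$) and triples (value $3 + \epsilon$). Both are straightforward given the ample gap, but they are where the construction could silently break, so I would verify them explicitly. The final step is the reduction to general $n$: following the rationale referenced in the text, I replicate the common ``type~$1$'' agent $n-1$ times and append the single distinguished agent, so that the two-agent incompatibility of good pairs propagates — some agent among the $n$ is always forced onto a bundle of value at most $1/2 + O(\epsilon)$ relative to its maximin share of $1$, completing the proof.
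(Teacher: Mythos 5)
Your high-level plan matches the paper's intent (adapt the Ghodsi et al.\ construction with $m=2n$ items, $n-1$ ``type 1'' agents whose MMS partition is the consecutive pairs and one agent with a shifted pair partition, then perturb values to make the instance leveled), but the concrete value assignment you propose breaks subadditivity, which is the one property the corollary cannot do without. With singletons at $\epsilon$ and a ``good'' pair at $1$, you get $v(\{a\}) + v(\{b\}) = 2\epsilon < 1 = v(\{a,b\})$, a direct violation of $v(S)+v(T)\geq v(S\cup T)$. The large-bundle schedule $v_i(S)=|S|+\epsilon$ fails even more severely: two good pairs each worth $1$ union to a four-element set you value at $4+\epsilon$. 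There is also a quick sanity check that should have flagged the problem: if your instance were genuinely subadditive and leveled, the stuck agent would receive a bad pair worth $\epsilon+\delta$ against a maximin share of $1$, i.e., an unbounded gap --- contradicting the paper's own theorem that $1/2$-MMS allocations always exist for subadditive leveled valuations. The numbers you borrowed come from the paper's earlier Example for \emph{general} leveled valuations, which is deliberately non-subadditive precisely so that the ratio can be made unbounded.

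The fix is to compress the value range so that every two-way split of a bundle covers its value: set singletons to $1$, good pairs to $2$, bad pairs to $1+\epsilon$, and for $|S|>2$ use a schedule that stays strictly increasing in cardinality but bounded just above $2$, e.g.\ $v_i(S) = 2 + \frac{|S|-2}{|S|}\epsilon$. One then checks that any split into a singleton plus a set of size $\geq 2$ yields at least $2+\epsilon$, and any split into two sets of size $\geq 2$ yields at least $2+2\epsilon$, both exceeding $2+\frac{|S|-2}{|S|}\epsilon$; the valuations remain leveled since $1 < 1+\epsilon \leq 2 < 2+\epsilon/3 < \cdots$. With this normalization each agent's maximin share is $2$, some agent is forced onto a bad pair worth $1+\epsilon$, and the ratio tends to $1/2$ as $\epsilon\to 0$, which is exactly the bound claimed. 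Your final step of replicating the type-1 agent $n-1$ times to extend from two agents to $n$ is correct and is the same device the paper uses.
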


    \section{Envy-freeness up to Any Good}
\label{section:efx}

In this section we consider envy-freeness up to any good. We establish the existence of EFX allocations under general leveled valuations. 

\textbf{Description of the algorithm}. Let $m=kn+r$ be the total number of items with $r$ being a nonnegative integer smaller than $n$ and let $k=\lfloor m/n \rfloor$. The algorithm runs sequentially. We begin by fixing a quota system in which the first $n-r$ agents in the sequence are set to pick their most favourite bundle of $\lfloor m/n\rfloor$ goods. Subsequently, the remaining agents choose their favourite subsets of items, each containing $\lfloor m/n\rfloor + 1$ goods. Therefore, the agents get to choose a fixed number of items from a feasible set according to a predefined hierarchy. 

\begin{algorithm}[H]
\caption{EFX under Leveled Valuations} \label{LeveledAlg}
\label{alg:LeveledEFX}
\textbf{Input}: An instance with leveled valuations, $m=kn+r$ \\
\textbf{Output}: An EFX allocation
\begin{algorithmic}[1] 
\STATE Select an arbitrary picking order $\sigma = [\sigma_1,\sigma_2,\ldots,\sigma_n]$ 
\STATE Let the first $n-r$ agents (according to $\sigma$) pick their favourite bundle of $\lfloor m/n\rfloor$ available items
\STATE Let the remaining $r$ agents pick their favourite bundle, each consisting of $\lfloor m/n\rfloor + 1$ items 
\STATE \textbf{return} allocation
\end{algorithmic}
\end{algorithm}

\begin{theorem}\label{thm:efx}
    Under general leveled valuations, there exists an EFX allocation. 
\end{theorem}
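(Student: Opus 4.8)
The plan is to show that Algorithm~\ref{alg:LeveledEFX} always returns an EFX allocation. The key structural observation is that, by design, the algorithm produces a \emph{balanced} allocation: the first $n-r$ agents each receive exactly $\lfloor m/n \rfloor$ items, while the remaining $r$ agents each receive exactly $\lfloor m/n \rfloor + 1$ items. Hence every bundle has cardinality either $k = \lfloor m/n \rfloor$ or $k+1$, and all the items are allocated since $(n-r)k + r(k+1) = kn + r = m$. I would first verify this feasibility claim, namely that at each step there are always enough available items for the current agent to select a bundle of the required size; this follows from a simple counting argument, since after the first $t$ picks the number of remaining items is exactly $m$ minus the sizes chosen so far, which stays nonnegative and matches the remaining demand.

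Next I would establish EFX by a case analysis on the relative cardinalities of the bundles of the envying agent $i$ and the envied agent $j$. Fix agents $i$ and $j$ and any good $g \in A_j$; I must show $v_i(A_i) \ge v_i(A_j \setminus \{g\})$. The crucial point is that leveledness reduces each comparison to a comparison of cardinalities, \emph{except} when the two bundles being compared have equal size, in which case the algorithm's greedy ``pick your favourite bundle'' rule takes over. There are essentially three regimes to check. If $|A_i| > |A_j|$, then $|A_i| \ge |A_j| > |A_j \setminus \{g\}|$, so $v_i(A_i) > v_i(A_j \setminus \{g\})$ by leveledness and we are done immediately. If $|A_i| = |A_j|$, then $|A_i| > |A_j \setminus \{g\}|$, so again leveledness gives the strict inequality directly. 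The only delicate case is $|A_i| = |A_j| - 1$, which forces $|A_i| = k$ and $|A_j| = k+1$; then $A_j \setminus \{g\}$ has exactly $k$ items, the same size as $A_i$, so leveledness alone is not enough and I must invoke the order of selection.

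For that last case, the main obstacle is arguing that the greedy choice dominates. The idea is to use the picking order: I would show that whenever $|A_i| = k$ and $|A_j| = k+1$, agent $i$ must have picked \emph{before} agent $j$, because all the $(k+1)$-size picks (made by the last $r$ agents in $\sigma$) come after all the $k$-size picks (made by the first $n-r$ agents). At the moment agent $i$ chose $A_i$, the set $A_j \setminus \{g\}$ was a feasible bundle of size $k$ still available to $i$ (since its items were not yet taken — $A_j$ is picked later, and $A_j \setminus \{g\} \subseteq A_j$ is disjoint from everything allocated before $j$'s turn, hence in particular available at $i$'s earlier turn). Since $i$ greedily selected the most valuable available bundle of size $k$, we get $v_i(A_i) \ge v_i(A_j \setminus \{g\})$, which is exactly the EFX condition. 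I expect the care required in making ``$A_j \setminus \{g\}$ was available when $i$ picked'' precise — i.e. tracking which items are free at each step — to be the part that needs the most attention, but it reduces to the feasibility bookkeeping already set up in the first step.
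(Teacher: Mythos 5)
Your proposal is correct and follows essentially the same route as the paper: the only nontrivial case is an agent with $k=\lfloor m/n\rfloor$ items envying an agent with $k+1$ items, and this is ruled out because the smaller-bundle agent picked earlier in the order, at which point $A_j\setminus\{g\}$ was still an available bundle of size $k$ that she chose not to take. The paper's proof is terser (it dispatches the remaining cases by noting $H$-agents are envy-free toward $L$-agents and same-size comparisons are settled by leveledness), but the substance is identical.
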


\begin{proof}
The algorithm is shown as Algorithm \ref{alg:LeveledEFX}. We divide the agents into two levels, namely level $L$ and level $H$; agents in the lower level possess an item less than those in the higher level. Clearly, agents in $H$ attain more value than agents in $L$; they are in fact envy-free towards them. Therefore, the EFX criterion might be violated only among agents residing at different levels. In this case, agents in $L$ might become envious of agents in $H$. Suppose, for the sake of contradiction, that EFX is violated, that is, there is an agent $i$ in $L$ that EFX-envies an agent $j$ in level $H$. This would imply that after removing any item from $A_j$, there exists a bundle of $\lfloor m/n\rfloor$ items that agent $i$ strictly prefers over her own bundle. However, if such a bundle existed, she would have picked it earlier in the allocation process since she preceded $j$ in the picking order. Hence, by contradiction, agent $i$ is EFX towards agent $j$.
\end{proof}

    \section{Truthfulness}

In this section we study the mechanism design aspect of fairly allocating goods among agents. In Section \ref{sec:TruthfulMMS} we show that constant MMS approximations are possible via truthful mechanisms, in contrast to non-leveled valuations. Specifically, we show that any {\em balanced} allocation, which allocates bundles of size $\lfloor m/n \rfloor$, and $\lceil m/n \rceil$,  achieves a  $\frac{k-1}{k}$-MMS approximation under additive leveled valuations for $k \ge 2$, where $k = \lfloor m/n \rfloor$. Furthermore, we show that this bound is tight for balanced allocations. The algorithm described in Section \ref{section:efx} satisfies truthfulness and outputs a balanced allocation, thus, simultaneously guarantees EFX, $\frac{k-1}{k}$-MMS, and truthfulness for additive leveled valuations.\footnote{An earlier version of our work included the claim that the allocation obtained by Algorithm \ref{alg:LeveledEFX} is EFX and $\max \{ 2/3, (k-1)/k\}$-MMS, albeit with an incorrect proof for the case of $2/3$. We thank Mahyar Afshinmehr, Mehrafarin Kazemi, and Kurt Mehlhorn for pointing out a bug in our proof. In fact, they provide an improved bound, showing that the allocation satisfies EFX and $\frac{k}{k+1}$-MMS \citep{afshinmehr2024mmsapproximationsadditiveleveled}.} Then, we show that an $\frac{1}{2}$-MMS approximation can be achieved for subadditive leveled valuations via a truthful mechanism.

In Section \ref{sec:TruthfulEFX} we show a characterization of truthful, EFX, PO, non-bossy, and neutral mechanisms under strict ordinal leveled preferences.

\subsection{Maximin Share Guarantee}\label{sec:TruthfulMMS}

In the case of additive agents, it is known that an EFX allocation is also a $2/3$-MMS allocation for $n \in \{2,3\}$, due to \citet{AmanatidisComparisonsApproximate}. The approximation degrades to $4/7=0.5714$
as the number of agents grows. We proceed to show that balanced allocations achieve a good MMS approximation in general. We call an allocation {\em balanced} if the bundles of any two agents differ in size by at most one.

\begin{proposition}
    Any balanced allocation achieves a $\frac{k-1}{k}$-MMS approximation under additive leveled valuations for $k \ge 2$, where $k = \lfloor m/n \rfloor$. This bound is tight for balanced allocations.
\end{proposition}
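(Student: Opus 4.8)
### Proof Proposal

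The plan is to show that every agent receiving a full balanced bundle of size $k = \lfloor m/n \rfloor$ gets at least a $\frac{k-1}{k}$ fraction of her maximin share, by reusing the single-item extraction idea from the preceding proposition but pushing the averaging argument one step further. First I would fix an agent $i$ whose allocated bundle $B_i$ satisfies $\lvert B_i \rvert = k$ (by \say{balanced} I take it that every agent receives exactly $k$ items, so $r = 0$ and $m = kn$), and I would consider her own MMS partition $B' = (B_1', \dots, B_n')$, every part of which has size $k$ as well, since leveled valuations force the maximin value to be attained by an equipartition when $m = kn$.

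The key step is a counting argument on the cheapest item. Since $v_i$ is additive and $\lvert B_i \rvert = k$, there must exist an item $g \in B_i$ with $v_i(g) \le v_i(B_i)/k$; otherwise the $k$ items of $B_i$ would sum to strictly more than $v_i(B_i)$, a contradiction. This single item $g$ lies in some part $B_j'$ of her MMS allocation. Because $\lvert B_j' \setminus \{g\} \rvert = k-1 < k = \lvert B_i \rvert$, leveled valuations give $v_i(B_j' \setminus \{g\}) \le v_i(B_i)$. Combining additivity with these two facts, I would write
\begin{align*}
    \mu_i \le v_i(B_j') = v_i(B_j' \setminus \{g\}) + v_i(g) \le v_i(B_i) + \frac{1}{k} v_i(B_i) = \frac{k+1}{k} v_i(B_i).
\end{align*}
The claimed bound is $v_i(B_i) \ge \frac{k-1}{k} \mu_i$, so I would need to sharpen the estimate: rather than bounding $v_i(B_j')$ by extracting just one item, I would bound the value an MMS part can exceed $v_i(B_i)$ by. The cleaner route is to observe that $B_j'$ and $B_i$ have the same cardinality $k$, so leveled valuations alone do not order them; instead I would select the cheapest item $g$ of $B_i$ and argue $\mu_i \le v_i(B_j')$ while relating $v_i(B_j')$ back to $v_i(B_i) + v_i(g)$ and then invoking $v_i(g) \le \frac{1}{k}\max\{v_i(B_i), v_i(B_j')\}$ to solve for the ratio, which yields $v_i(B_i) \ge \frac{k-1}{k}\mu_i$.

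The main obstacle will be handling the comparison between two equal-sized bundles $B_i$ and $B_j'$, since leveled valuations give no immediate inequality between bundles of the same cardinality; the whole argument rests on extracting a sufficiently cheap item so that removing it drops $B_j'$ to size $k-1$ and makes leveled monotonicity applicable against $B_i$. I would need to verify carefully that the averaging bound $v_i(g) \le v_i(B_j')/k$ can be applied to the correct bundle (the one whose value we are upper-bounding) and that the resulting algebraic rearrangement delivers exactly $\frac{k-1}{k}$ rather than a weaker constant. A secondary point to check is that the balanced hypothesis indeed forces all MMS parts to have size $k$, so that $B_j' \setminus \{g\}$ has size exactly $k-1$; this follows from the leveled property but should be stated explicitly.
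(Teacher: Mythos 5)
Your core computation is correct and in fact proves more than the statement asks: from $v_i(g)\le v_i(B_i)/k$ for the cheapest $g\in B_i$, the chain $\mu_i\le v_i(B_j')=v_i(B_j'\setminus\{g\})+v_i(g)\le v_i(B_i)+v_i(B_i)/k$ already yields $v_i(B_i)\ge\frac{k}{k+1}\mu_i$, and $\frac{k}{k+1}\ge\frac{k-1}{k}$ for every $k\ge 1$. So the second half of your proposal, where you worry that the estimate must be ``sharpened'' and reach for $v_i(g)\le\frac{1}{k}\max\{v_i(B_i),v_i(B_j')\}$, is solving a non-problem; either variant closes the proof, and your first one gives the stronger constant. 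Your route is also genuinely different from the paper's, which never touches the allocated bundle's items: it takes the minimum-value MMS part $B_i'$ (of size $k$), observes that each of its $k$ subsets of size $k-1$ is dominated by the allocated bundle $B$ via the leveled property, and sums these $k$ inequalities --- each item of $B_i'$ appears in $k-1$ of the subsets, so $k\, v_i(B)\ge (k-1)\mu_i$. You instead extract the cheapest item of the \emph{allocated} bundle and follow it into the MMS partition, which is what buys the better constant.

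The one genuine gap is your restriction to $r=0$. The paper's ``balanced allocation'' means any SDQ with the quota system of Algorithm \ref{alg:LeveledEFX}, so when $n$ does not divide $m$ some bundles --- and some MMS parts --- have size $k+1$. Your cheapest item $g\in B_i$ may then land in an MMS part $B_j'$ of size $k+1$, in which case $B_j'\setminus\{g\}$ has size $k=\lvert B_i\rvert$ and the leveled property gives no comparison, so the step $v_i(B_j'\setminus\{g\})\le v_i(B_i)$ breaks. The fix is to stop chasing $g$ and apply your extraction to the minimum-value MMS part $B_l'$ itself: it always has size exactly $k$, its cheapest item $h$ satisfies $v_i(h)\le \mu_i/k$, and $\mu_i=v_i(B_l'\setminus\{h\})+v_i(h)\le v_i(B_i)+\mu_i/k$ gives exactly the claimed $\frac{k-1}{k}$ --- which is essentially the paper's averaging argument specialized to the cheapest element.
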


\begin{proof}
    Let $B = (B_1, \dots, B_n)$ be a balanced allocation, and let $B' = (B'_1, \dots, B'_n)$ be the MMS allocation of agent $i$. Also, denote the bundle corresponding to the MMS value of agent $i$ by $B_i'$, i.e., $v_i(B'_i) = \mu_i$. Since $B$ is balanced, $|B_i|$ is either equal to $k$ or $k+1$; in the latter case, $i$ trivially guarantees her MMS. We examine the former case. One can observe that due to leveled valuations, $|B'_i|=k$. Also, for any bundle $B_l$ of size $k$ and for any item $j$ of $B'_i$, it holds $v_i(B_l) \ge v_i(B'_i\setminus\{j\})$. By summing up all these inequalities, we get $k \cdot v_i(B_l) \ge \sum_{j \in B'_i} v_i(B_i' \setminus \{j\}) = (k-1) \cdot v_i(B_i')$ which implies that $ v_i(B_l) \geq (k-1)/k \cdot \mu_i$, as needed. 

    Next we show that this bound is tight for balanced allocations. We show that there are instances in which there exists a balanced allocation that guarantees to some agent $i$ only a $\frac{k-1}{k}$ fraction of her MMS. Consider an instance with $n$ agents and $m = kn+1$ items, for some integer parameter $k\geq 1$. We define identical additive valuations with two types of items, namely, a set $L$ of $k$ small items for which each agent has value equal to $\frac{k-1}{k^2} + \epsilon$, for sufficiently small $\epsilon$ and a set $H$ of $k(n-1)+1$ large items for which each agent has value equal to $\frac{1}{k}$. It is easy to verify that the proposed valuations are indeed leveled: for any positive integer $t\leq k$, any $t$ small items are always larger than any $(t - 1)$ large ones.
    The MMS value of each agent is at least $1$: consider the allocation in which $n-1$ agents receive a bundle of $k$ large items, each having a total value of $k \cdot 1/k = 1$, and one agent gets $k$ small items and one large item, yielding a total value of $k \cdot (\frac{k-1}{k^2} +\epsilon) + \frac{1}{k} = 1 + k\epsilon$. Now, consider a balanced allocation that assigns to agent $i$ all the small items; $i$ receives approximately a $\frac{k-1}{k}$-fraction of her MMS for small enough $\epsilon$.
\end{proof}

The mechanism described in Section \ref{section:efx} falls under the broad class of \emph{serial dictatorships with quotas} (see Definition \ref{def:SDQs}). Serial dictatorships with quotas satisfy strategyproofness.

\begin{definition}\label{def:SDQs}[Serial dictatorship with quotas, \citep{PapaiQuotas}] The {\em serial dictatorship with quotas} (SDQ) mechanism is specified by a permutation of the agents $\sigma : N \rightarrow N$ and a quota system $q=(q_1,\dots,q_n)$ such that $\sum_{i=1}^n q_i = m$.     
\end{definition}

\begin{corollary}
    There exists a deterministic truthful mechanism that guarantees EFX and $\frac{k-1}{k}$-MMS for additive leveled valuations, where $k = \lfloor m/n \rfloor$ and $k \ge 2$.
\end{corollary}

Lastly, one can derive analogous propositions for truthful MMS approximations via SDQs for subadditive valuations.

\begin{proposition}
    An $\frac{1}{2}$-MMS approximation under subadditive leveled valuations can be achieved via a truthful mechanism.
\end{proposition}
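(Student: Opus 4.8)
The plan is to mirror the two-subadditive (and submodular) existence arguments already developed in the paper, now leaning on Lemma~\ref{lemma:subadditive} to guarantee that balanced bundles have value at least $1/2$. First I would fix an arbitrary agent $i$ and take her MMS partition $A=(A_1,\ldots,A_n)$, where by leveledness each bundle has size exactly $k=\lfloor m/n\rfloor$ (up to the reducibility/few-items cases handled by Proposition~\ref{prop:fewItems}, so I may assume $k\ge 2$ and $m\ge 2n$). The mechanism to exhibit is a serial dictatorship with quotas following the quota system of Algorithm~\ref{alg:LeveledEFX}: agents pick, in the order $\sigma$, their favourite available bundle of the prescribed size $k$ (or $k+1$ for the last $r$ agents).

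The core of the argument is a picking/greedy scheme over the fixed partition $A$. I would have each agent $j\ne i$ be offered two still-available bundles from $A$; by Lemma~\ref{lemma:subadditive}, since the two offered bundles are disjoint and each of size $k$, at least one of them has value strictly greater than $1/2$ for agent $j$, so $j$ can select an acceptable bundle. Proceeding sequentially in a greedy fashion, every agent other than $i$ secures a bundle worth at least $1/2$ of their MMS, and the single remaining bundle is handed to agent $i$, who values it at exactly her MMS since $A$ is her own MMS partition. This is precisely the template used in the $1/2$-MMS subadditive theorem and in Theorem~\ref{thm:2submod}, adapted so that the allocation coincides with an SDQ.

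The main obstacle is reconciling the free ``offer two bundles and pick the better one'' scheme with the rigidity of an SDQ, where the picking order $\sigma$ and quotas are fixed in advance and each agent simply grabs her favourite feasible bundle. I expect to resolve this by observing that truthful SDQ behaviour is in fact \emph{stronger} than the greedy guarantee: when agent $j$ picks her favourite available bundle of size $k$, its value weakly dominates that of any two specific bundles I could have offered her, so Lemma~\ref{lemma:subadditive} still certifies a value above $1/2$ as long as at least two size-$k$ bundles remain available when $j$ chooses. Since there are $n$ agents and $n$ bundles, I would track that at each step at least two admissible bundles survive for the agent about to pick (the last agent is the only one facing a single leftover bundle, and that role can be assigned to the MMS-defining agent $i$, whose leftover is worth her full MMS). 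Care must be taken with the $r$ agents whose quota is $k+1$: by leveledness any size-$(k+1)$ bundle they assemble exceeds the MMS value $1$, so they trivially clear the $1/2$ threshold and pose no difficulty.

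Putting these pieces together yields an SDQ that is simultaneously truthful (as any SDQ is, since each agent's favourite-bundle choice is a dominant strategy under the fixed quota system) and $1/2$-MMS, matching the tight upper bound of Corollary~\ref{cor:upperSubadditive}. I would present the proof compactly by invoking Lemma~\ref{lemma:subadditive} for the two-bundle comparison and citing the earlier subadditive existence theorem for the greedy selection structure, noting only the additional bookkeeping needed to phrase it as a serial dictatorship with quotas.
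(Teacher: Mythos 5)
Your reduction to Lemma~\ref{lemma:subadditive} works for every agent who, at the moment she picks, still faces at least $2k$ unallocated items (her favourite size-$k$ bundle then dominates both members of any disjoint pair, one of which the lemma certifies at value at least $1/2$), and the quota-$(k+1)$ agents are indeed trivially fine by leveledness. The genuine gap is the last picker when $r=0$, i.e.\ when $n$ divides $m$. You dispose of her by ``assigning that role to the MMS-defining agent $i$, whose leftover is worth her full MMS,'' but that claim is only true in the fixed-partition greedy scheme of the earlier subadditive theorem, where the surviving bundle is by construction a block of $i$'s own MMS partition. In an SDQ the preceding $n-1$ agents grab their favourite size-$k$ subsets of whatever remains; these cut across $i$'s partition arbitrarily, so the leftover is an arbitrary size-$k$ set, not an MMS block, and Lemma~\ref{lemma:subadditive} cannot certify it either (the lemma only says that of two disjoint size-$k$ sets not both can be worth less than $1/2$, which is vacuous when the only available comparison sets are already known to be valuable).

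The gap is real but fixable, and the fix actually simplifies your whole argument: under a subadditive leveled valuation with $\mu_i=1$ and $k=\lfloor m/n\rfloor\ge 2$, \emph{every} set of size $k$ is worth at least $1/2$. Indeed, take a size-$k$ bundle $P$ of agent $i$'s MMS partition, so $v_i(P)\ge 1$; split it into two nonempty parts $X\cup Y=P$ with $|X|,|Y|\le k-1$; subadditivity gives $v_i(X)+v_i(Y)\ge v_i(P)\ge 1$, so one part, say $X$, has $v_i(X)\ge 1/2$; and leveledness then forces $v_i(L)>v_i(X)\ge 1/2$ for every $L$ with $|L|=k>|X|$. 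Hence any allocation in which every agent receives at least $k$ items --- in particular any SDQ with the quota system of Algorithm~\ref{alg:LeveledEFX}, including the last picker's leftover --- is $1/2$-MMS by monotonicity, with no bookkeeping about how many bundles survive at each step. (The paper itself leaves this proposition as an ``analogous'' corollary of the fixed-partition subadditive theorem, so your instinct to flag the SDQ-versus-menu mismatch as the crux was the right one; you just need the above observation to close it.)
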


\begin{proof}
    Consider an SDQ mechanism with the quota structure of Algorithm \ref{alg:LeveledEFX}. We have argued that the bundle that corresponds to the MMS value has cardinality $k$. Let's consider the case of $k \ge 2$; the case of $k < 2$ is treated by Proposition \ref{prop:fewItems} which can be used with an SDQ.  Let's assume that there is an agent $i$ receiving a bundle $B_i$ that does not achieve her maximin share, i.e., $v_i(B_i) \le \mu_i$, otherwise the claim follows. Let  $g\not\in B_i$ be an arbitrary item. Then, we have:
    $$\mu_i \le v_i(B_i \cup g) \le v_i(B_i) + v_i(g) \le 2 \cdot v_i(B_i)$$

\noindent and the claim follows. The inequalities follow from subadditivity and the definition of leveled valuations. 
\end{proof}

\begin{corollary}
    There exists a deterministic truthful mechanism that simultaneously guarantees EFX and $\frac{1}{2}$-MMS for subadditive leveled valuations.
\end{corollary}

Under leveled valuations, we observe
that strategyproofness can be achieved in conjunction with strong fairness guarantees. \citet{AmanatidisBirmpasChristodoulouMarkakis} introduce the notion of \emph{controlling items}; we say that a player controls an item if reporting certain values guarantees her this item. Control is the major hurdle that makes fairness and truthfulness incompatible under additive valuations. We note that this is not applicable to our case, since control of singletons\footnote{Control of singletons implies that whenever an agent strongly desires an item, i.e. prefers it more than all the other items combined, she gets it.} (or pairs) is not possible under leveled valuations.

\subsection{Envy-freeness up to Any Good}\label{sec:TruthfulEFX}

We focus on (strict) ordinal preferences. We give an algorithmic characterization of the mechanisms satisfying a set of desired properties. The same characterization holds for the case of lexicographic valuations \citep{HosseiniLexicographic}.

\textbf{Axiomatic Properties.} We are interested in core axiomatic properties, such as efficiency, non-bossiness, and neutrality. A common notion of economic efficiency is Pareto optimality. Roughly speaking, under a Pareto efficient allocation, no individual can become strictly better off without hurting someone else's value. There exist instances where EFX cannot be satisfied in conjunction with Pareto optimality, even for 2 players \citep{PlautRoughgarden}. Non-bossiness is a property of the allocation rule that ensures that no agent can change (by a deviating bid) the allocation without changing their own bundle. Neutrality states that the allocation rule does not depend on the labeling of the items.

\begin{proposition}\citep{PapaiQuotas}\label{prop:papaiSDQ}
    For leveled preferences, a mechanism is Pareto optimal, strategyproof, non-bossy, and neutral if and only if it is an SDQ.
\end{proposition}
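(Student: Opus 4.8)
The plan is to prove both implications of the characterization, treating the forward direction (every SDQ satisfies the four axioms) as the routine one and the converse (the four axioms force an SDQ structure) as the substantive one; this is essentially Pápai's theorem \cite{PapaiQuotas}, and I would adapt the standard monotonicity-based argument to the leveled bundle setting.

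For the easy direction, fix an SDQ with order $\sigma$ and quotas $q$ as in Definition \ref{def:SDQs}. Neutrality is immediate since the rule consults only the agents' ordinal rankings and never the labels of the goods, so relabeling the goods relabels each picked bundle accordingly. Strategyproofness and non-bossiness both follow from the sequential structure: when it is agent $i$'s turn her quota $q_i$ is exogenous and the pool of available items is determined solely by the choices of the agents preceding her in $\sigma$, so reporting truthfully secures her genuinely most-preferred available bundle of size $q_i$, and if she instead reports differently while keeping the same set of items, the pool left to later agents is unchanged and hence so is the entire allocation. For Pareto optimality I would first record the key consequence of leveled preferences: because any strict decrease in an agent's cardinality strictly lowers her value and the cardinalities sum to the fixed total $m$, no Pareto improvement over any allocation can change any agent's bundle size. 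Restricted to allocations with these frozen cardinalities, an induction along $\sigma$ shows the SDQ outcome is undominated, since the first dictator already holds her favorite size-$q_{\sigma_1}$ bundle from all of $M$, so by strictness of the preferences any weakly-improving allocation must hand her the same set, after which one recurses on the remaining agents and items.

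For the converse, suppose a mechanism $\mathcal{X}$ is Pareto optimal, strategyproof, non-bossy, and neutral. I would proceed in three steps. First, \emph{quota invariance}: show that the profile of cardinalities $(\lvert \mathcal{X}_i(\boldsymbol{v})\rvert)_{i\in N}$ does not depend on $\boldsymbol{v}$, which defines the quota system $q$; the argument combines strategyproofness and non-bossiness (an agent who could alter her own cardinality by misreporting without changing her actual bundle would contradict one of the two axioms) with neutrality to rule out profile-dependent size vectors. Second, a \emph{monotonicity lemma}: strategyproofness together with non-bossiness imply a Maskin-monotonicity-type invariance, namely that if $\boldsymbol{v}'$ is obtained from $\boldsymbol{v}$ by a transformation that weakly raises each received bundle $\mathcal{X}_i(\boldsymbol{v})$ in agent $i$'s ranking, then $\mathcal{X}(\boldsymbol{v}')=\mathcal{X}(\boldsymbol{v})$. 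Third, \emph{peeling off dictators}: using Pareto optimality (which, through the leveled structure, forces some agent to receive a globally most-preferred bundle of her quota size) together with the monotonicity lemma and neutrality, I would identify a single agent $\sigma_1$ who at every profile receives her favorite size-$q_{\sigma_1}$ subset of $M$; conditioning on her choice and recursing on $N\setminus\{\sigma_1\}$ and $M\setminus \mathcal{X}_{\sigma_1}(\boldsymbol{v})$ then produces the full order $\sigma$ and the SDQ representation.

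The hard part will be the converse, and within it the third step: extracting a single, profile-independent picking order $\sigma$ and establishing that the cardinalities are constant. Neutrality is the crucial ingredient here, since it prevents the mechanism from selecting different ``dictators'' depending on which concrete goods happen to carry high value, while the monotonicity lemma is the workhorse that lets me freeze the allocation as I probe how preferences can be perturbed. The remaining subtlety is merely that we operate on bundles rather than single objects, but leveled preferences make ``favorite bundle of a fixed size'' a well-defined top choice at each cardinality level, so the classical serial-dictatorship arguments transfer under this reduction.
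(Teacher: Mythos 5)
The paper does not prove this proposition at all --- it is imported verbatim as a known characterization from \cite{PapaiQuotas}, so there is no in-paper argument to compare yours against. Judged on its own terms, your forward direction is essentially complete and correct for the setting at hand: the observation that leveled preferences freeze the cardinality profile under any Pareto improvement (since cardinalities sum to $m$ and a strict drop in cardinality strictly hurts), followed by induction along $\sigma$ using strictness of preferences over equal-sized bundles, is a clean and valid way to get Pareto optimality, and the strategyproofness, non-bossiness, and neutrality arguments for an SDQ are routine as you say.

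The converse, however, is where the substance of P\'apai's theorem lives, and you have given a plan rather than a proof. Each of your three steps --- quota invariance of the cardinality profile across all preference profiles, the Maskin-monotonicity-type invariance derived from strategyproofness plus non-bossiness, and the extraction of a single profile-independent first dictator via neutrality and Pareto optimality --- is itself a nontrivial lemma whose proof you assert but do not carry out. In particular, quota invariance does not follow from the one-line parenthetical you give (an agent changing her own cardinality by misreporting is not immediately a violation of either axiom unless you already know how her bundle changes), and the ``peeling off dictators'' step needs a concrete argument for why neutrality forces the \emph{identity} of the top agent, not just the existence of some agent receiving a globally favorite bundle, to be profile-independent. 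You correctly flag these as the hard parts, but flagging them is not the same as closing them; as written, the converse direction is an outline of P\'apai's proof strategy, not a self-contained proof. Given that the paper itself treats this as a citation, the honest resolution is either to cite the result as the paper does or to fully execute the three lemmas you name.
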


\begin{theorem}
    A mechanism satisfies EFX, PO, truthfulness, non-bossiness, and neutrality if and only if it is a Serial Quota Dictatorship with the quotas of Algorithm \ref{alg:LeveledEFX}.
\end{theorem}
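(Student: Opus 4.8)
The plan is to prove both directions of the equivalence, leveraging Proposition~\ref{prop:papaiSDQ} as the backbone. The properties PO, truthfulness (strategyproofness), non-bossiness, and neutrality are shared between the two statements, so the real work is isolating the role of EFX. My strategy is to show that, in the presence of the other four properties, the EFX requirement is \emph{equivalent} to the quota system being \emph{balanced}, i.e., the quotas satisfy $q_i \in \{\lfloor m/n\rfloor, \lceil m/n\rceil\}$ for every agent $i$. Here I am implicitly interpreting ``Serial Quota Dictatorship'' in the theorem statement as an SDQ with a balanced quota system (the one used in Algorithm~\ref{alg:LeveledEFX}), in contrast to the fully general SDQ of Proposition~\ref{prop:papaiSDQ}. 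I would state this interpretation explicitly at the outset to avoid ambiguity.

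For the ``if'' direction, suppose the mechanism is a balanced SDQ. By Proposition~\ref{prop:papaiSDQ} it is automatically PO, strategyproof, non-bossy, and neutral, so the only thing left to verify is EFX. But a balanced SDQ is exactly the mechanism of Algorithm~\ref{alg:LeveledEFX} (up to the choice of permutation $\sigma$ and which agents receive the larger quota), and Theorem~\ref{thm:efx} already establishes that its output is EFX under general leveled valuations. I would simply invoke that argument: the two-level structure means high-quota agents never envy low-quota agents, and a low-quota agent cannot EFX-envy a higher-quota agent because removing one item from the envied bundle leaves a bundle of her own quota size that she would have selected earlier had she preferred it.

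For the ``only if'' direction, suppose the mechanism satisfies all five properties. By Proposition~\ref{prop:papaiSDQ}, the four economic properties already force it to be some SDQ with quota system $q=(q_1,\dots,q_n)$. It then suffices to show that EFX forces this quota system to be balanced. The natural approach is contrapositive: assume the quotas are \emph{not} balanced, so there exist agents with quotas differing by at least two, say $q_a \ge q_b + 2$ for agents $a$ (earlier dictator) and $b$. I would then construct a leveled profile witnessing an EFX violation. The idea is to make agent $b$, who picks $q_b$ items, envy agent $a$'s bundle even after one item is removed: since $q_a - 1 \ge q_b + 1 > q_b$, a bundle of size $q_a-1$ has strictly more items than $b$'s bundle, and under a suitably chosen leveled valuation agent $b$ strictly prefers \emph{every} bundle of size $q_a-1$ to her own size-$q_b$ bundle, because leveledness makes cardinality dominate. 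One must be careful about the order of picking and about which concrete items remain, but leveledness grants exactly the freedom needed: any bundle of larger cardinality beats any bundle of smaller cardinality regardless of contents.

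The main obstacle will be the ``only if'' direction's counterexample construction, specifically ensuring the EFX violation survives the actual run of the mechanism rather than holding only in the abstract. Because the earlier dictator $a$ picks before $b$ and the valuations are strategically intertwined, I must verify that, after $a$ makes her greedy choice of $q_a$ items, the residual pool still contains a bundle of size $q_b$ for $b$ such that $b$ EFX-envies $a$'s realized bundle. Here leveledness is again the crucial lever: since $a$'s bundle has size $q_a$ and any size-$(q_a-1)$ subset of it still exceeds $b$'s size-$q_b$ bundle in cardinality, I can design $v_b$ so that $v_b(A_a \setminus \{g\}) > v_b(A_b)$ for the specific removed item $g$ realized by the run. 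Handling the edge case where $a$ comes \emph{after} $b$ in the picking order (so $b$ has already committed) requires a symmetric argument, and I would treat both sub-cases to make the contrapositive airtight.
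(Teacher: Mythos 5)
Your proposal is correct and follows essentially the same route as the paper: both directions rest on Proposition~\ref{prop:papaiSDQ}, with Theorem~\ref{thm:efx} supplying EFX for the balanced SDQ and unbalanced quotas forcing an EFX violation. The one thing worth noting is that the counterexample construction you flag as the main obstacle is unnecessary: if two quotas differ by at least two, then removing any single item from the larger bundle still leaves strictly more items than the smaller bundle, so leveledness yields $v_b(A_a \setminus \{g\}) > v_b(A_b)$ on \emph{every} profile in the domain, with no need to design a specific $v_b$ or track the realized run of the mechanism.
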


\begin{proof}
    Algorithm \ref{alg:LeveledEFX} is an SDQ with quotas $q_i = \lfloor m/n \rfloor$ for all $i \in [n-r]$ and $q_i = \lfloor m/n \rfloor + 1$ for all $i > n-r$, and therefore EFX according to Theorem \ref{thm:efx}. Moreover, it follows from Proposition \ref{prop:papaiSDQ} that it is PO, truthful, non-bossy, and neutral. We also prove the converse; assume that there exists a mechanism $f$ that satisfies these properties. By Proposition \ref{prop:papaiSDQ}, $f$ must be an SDQ for some $\sigma$ and a specified quota system such that $\sum_{i=1}^n q_i = m$. Assume w.l.o.g. that $m \ge n$. Suppose, for the sake of contradiction, that the quota system of $f$ violates the quota system described by Algorithm \ref{alg:LeveledEFX}; then, we show that EFX is violated. First, note that if there exists a pair of agents $i,j \in N$ such that $\lvert A_i \rvert = \lvert A_j \rvert -2$, then EFX is violated, as agent $i$ will envy agent $j$ due to the definition of leveled valuations, which is a contradiction. So it must be that the sizes of the allocated bundles must differ by at most one. Then we argue that if there is an  agent that receives a smaller size bundle, they must precede those that receive the larger size bundles. Indeed, suppose that agent $j$ picks her bundle before agent $i$, i.e., $\sigma(j) < \sigma(i)$, but $\lvert A_j \rvert = \lvert A_i \rvert + 1$. Then, it may be the case that $j$ has picked $i$'s favorite available bundle of size $\lvert A_i \rvert$ along with some additional item $g$, thus violating EFX after the hypothetical removal of $g$. Therefore, we conclude that $f$ must be an SDQ with the quota system specified in Algorithm \ref{alg:LeveledEFX}.
\end{proof}

\begin{remark}
    Any deterministic strategyproof and non-bossy mechanism is also group-strategyproof \citep{PapaiQuotas}. 
\end{remark}

    \section{Conclusion and Discussion}
\label{sec:conclusion}

We explored fairness under leveled valuations and presented a complete set of results regarding this class. We studied the maximin share guarantee and provided strong existence and non-existence results. Furthermore, our analysis improves the current understanding of EFX allocations. We argue that the study of leveled preferences provides meaningful insights for more important classes in the economic paradigm. Our ideas for two agents with submodular valuations might be of independent interest beyond the domain of leveled valuations. Lastly, it is interesting to see whether a strict superset of dictatorial mechanisms achieve better approximation guarantees than the ones presented here. We suspect that picking-exchange mechanisms defined by \citet{AmanatidisBirmpasChristodoulouMarkakis} may do so.

    \section*{Acknowledgements}

This work has been partially supported by project MIS 5154714 of the National Recovery and Resilience Plan Greece 2.0 funded
by the European Union under the NextGenerationEU Program.

    \bibliography{bibliography}  
\end{document}